\documentclass[aps,prl,superscriptaddress,twocolumn,footinbib]{revtex4-1}
\usepackage{amsthm,amsmath,amssymb,amsbsy}

\usepackage{color}
\definecolor{darkred}{rgb}{0.8,0.1,0.1}
\definecolor{lightblue}{rgb}{0.1,0.1,0.8}

%%%%%%%%%%%%%%%%%%%%
%\usepackage[titletoc,title]{appendix}
\usepackage{graphicx,epic,eepic,epsfig,verbatim,color}
 \usepackage{tikz}
%  \usetikzlibrary{positioning}
%%%%%%%%%%%%%%%%%%%%%%%%%%%%%%%%%%

\usepackage[T1]{fontenc}
\usepackage[utf8]{inputenc}
\usepackage[breaklinks]{hyperref}

%\hyphenation{op-tical net-works semi-conduc-tor}
%\usepackage[marginal]{footmisc} 
%\usepackage{url}

%\newtheorem{definition}{Definition}
%\newtheorem{proposition}{Proposition}
%\newtheorem{lemma}{Lemma}
%\newtheorem{algorithm}[definition]{Algorithm}
%\newtheorem{fact}[definition]{Fact}
%\newtheorem{theorem}{Theorem}
%\newtheorem{corollary}[definition]{Corollary}
%\newtheorem{conjecture}[definition]{Conjecture}
%\newtheorem{postulate}[definition]{Postulate}
%\newtheorem{axiom}[definition]{Axiom}
\newtheorem{theorem}{Theorem}

\newtheorem{corollary}[theorem]{Corollary}

\newtheorem{lemma}[theorem]{Lemma}

%\newtheorem{remark}[theorem]{Remark}
%\def\braket#1#2{\langle#1\vert #2\rangle}
%\def\ketbra#1#2{\vert #1\rangle\langle #2\vert}
%\def\ket#1{\vert #1\rangle}
%\def\bra#1{\langle #1\vert}

%\def\Tr{\mathrm{Tr}}
%\def\C{\hbox{$\mit I$\kern-.7em$\mit C$}}
%\def\R{\hbox{$\mit I$\kern-.6em$\mit R$}}
%\def\N{\hbox{$\mit I$\kern-.65em$\mit N$}}
%%%%%%%%%%%%%%%%%%box set up%%%%%%%%%%%

\usepackage{cleveref}
\usepackage{graphicx}
\usepackage{xcolor}

\definecolor{darkblue}{RGB}{0,76,156}
\definecolor{darkkblue}{RGB}{0,0,153}
\definecolor{blue2}{RGB}{102,178,255}

\def\endenv{\ifmmode\;\else{\unskip\nobreak\hfil
\penalty50\hskip1em\null\nobreak\hfil\;
\parfillskip=0pt\finalhyphendemerits=0\endgraf}\fi}

\newenvironment{remark}{\noindent \textbf{{Remark~}}}{}
\newenvironment{example}{\noindent \textbf{{Example~}}}{\qed}

% Align := properly in math mode
\mathchardef\ordinarycolon\mathcode`\:
\mathcode`\:=\string"8000
\def\vcentcolon{\mathrel{\mathop\ordinarycolon}}
\begingroup \catcode`\:=\active
  \lowercase{\endgroup
  \let :\vcentcolon
  }

%%% adaptive paranthesis
\makeatletter
\def\resetMathstrut@{%
  \setbox\z@\hbox{%
    \mathchardef\@tempa\mathcode`\[\relax
    \def\@tempb##1"##2##3{\the\textfont"##3\char"}%
    \expandafter\@tempb\meaning\@tempa \relax
  }%
  \ht\Mathstrutbox@\ht\z@ \dp\Mathstrutbox@\dp\z@}
\makeatother
\begingroup
  \catcode`(\active \xdef({\left\string(}
  \catcode`)\active \xdef){\right\string)}
\endgroup
\mathcode`(="8000 \mathcode`)="8000

\newcommand{\nc}{\newcommand}
\nc{\rnc}{\renewcommand}
\nc{\beg}{\begin{equation}}
\nc{\eeq}{{\end{equation}}}
\nc{\beqa}{\begin{eqnarray}}
\nc{\eeqa}{\end{eqnarray}}
\nc{\lbar}[1]{\overline{#1}}
\nc{\ketbra}[2]{|#1\rangle\!\langle#2|}

\nc{\avg}[1]{\langle#1\rangle}
\nc{\Rank}{\operatorname{Rank}}
\nc{\smfrac}[2]{\mbox{$\frac{#1}{#2}$}}
\nc{\tr}{\operatorname{Tr}}
\nc{\ox}{\otimes}
\nc{\dg}{\dagger}
\nc{\dn}{\downarrow}
\nc{\cA}{{\cal A}}
\nc{\cB}{{\cal B}}
\nc{\cC}{{\cal C}}
\nc{\cD}{{\cal D}}
\nc{\cE}{{\cal E}}
\nc{\cF}{{\cal F}}
\nc{\cG}{{\cal G}}
\nc{\cH}{{\cal H}}
\nc{\cI}{{\cal I}}
\nc{\cJ}{{\cal J}}
\nc{\cK}{{\cal K}}
\nc{\cL}{{\cal L}}
\nc{\cM}{{\cal M}}
\nc{\cN}{{\cal N}}
\nc{\cO}{{\cal O}}
\nc{\cP}{{\cal P}}
\nc{\cQ}{{\cal Q}}
\nc{\cR}{{\cal R}}
\nc{\cS}{{\cal S}}
\nc{\cT}{{\cal T}}
\nc{\cV}{{\cal V}}
\nc{\cU}{{\cal U}}
\nc{\cX}{{\cal X}}
\nc{\cY}{{\cal Y}}
\nc{\cZ}{{\cal Z}}
\nc{\cW}{{\cal W}}
\nc{\csupp}{{\operatorname{csupp}}}
\nc{\qsupp}{{\operatorname{qsupp}}}
\nc{\var}{{\operatorname{var}}}
\nc{\rar}{\rightarrow}
\nc{\lrar}{\longrightarrow}
\nc{\polylog}{{\operatorname{polylog}}}
\nc{\1}{{\mathds{1}}}
\nc{\wt}{{\operatorname{wt}}}
\nc{\av}[1]{{\left\langle {#1} \right\rangle}}
\nc{\supp}{{\operatorname{supp}}}

\def\x{\xi}

\nc{\RR}{{{\mathbb R}}}
\nc{\CC}{{{\mathbb C}}}
\nc{\FF}{{{\mathbb F}}}
\nc{\NN}{{{\mathbb N}}}
\nc{\ZZ}{{{\mathbb Z}}}
\nc{\PP}{{{\mathbb P}}}
\nc{\QQ}{{{\mathbb Q}}}
\nc{\UU}{{{\mathbb U}}}
\nc{\EE}{{{\mathbb E}}}
%\nc{\id}{{\operatorname{id}}}

\nc{\CHSH}{{\operatorname{CHSH}}}

% wcl
\nc{\be}{\begin{equation}}
\nc{\ee}{{\end{equation}}}
\nc{\bea}{\begin{eqnarray}}
\nc{\eea}{\end{eqnarray}}
\nc{\Hom}[2]{\mbox{Hom}(\CC^{#1},\CC^{#2})}
\nc{\rU}{\mbox{U}}

% switching between |i> or |e_i> for the standard basis
% \nc{\ob}[1]{e_{#1}}
\nc{\ob}[1]{#1}

\nc{\SEP}{{\text{SEP}}}
\nc{\NS}{{\text{NS}}}
\nc{\LOCC}{{\text{LOCC}}}
\nc{\PPT}{{\text{PPT}}}
\nc{\EXT}{{\text{EXT}}}
\nc{\Sym}{{\operatorname{Sym}}}

%\nc{\ln}{{\text{ln}}}

\nc{\ERLO}{{E_{\text{r,LO}}}}
\nc{\ERLOCC}{{E_{\text{r,LOCC}}}}
\nc{\ERPPT}{{E_{\text{r,PPT}}}}
\nc{\ERLOCCinfty}{{E^{\infty}_{\text{r,LOCC}}}}
\nc{\Aram}{{\operatorname{\sf A}}}

\usepackage{tikz}
\usetikzlibrary{arrows}

\makeatletter
\def\grd@save@target#1{%
  \def\grd@target{#1}}
\def\grd@save@start#1{%
  \def\grd@start{#1}}
\tikzset{
  grid with coordinates/.style={
    to path={%
      \pgfextra{%
        \edef\grd@@target{(\tikztotarget)}%
        \tikz@scan@one@point\grd@save@target\grd@@target\relax
        \edef\grd@@start{(\tikztostart)}%
        \tikz@scan@one@point\grd@save@start\grd@@start\relax
        \draw[minor help lines,magenta] (\tikztostart) grid (\tikztotarget);
        \draw[major help lines] (\tikztostart) grid (\tikztotarget);
        \grd@start
        \pgfmathsetmacro{\grd@xa}{\the\pgf@x/1cm}
        \pgfmathsetmacro{\grd@ya}{\the\pgf@y/1cm}
        \grd@target
        \pgfmathsetmacro{\grd@xb}{\the\pgf@x/1cm}
        \pgfmathsetmacro{\grd@yb}{\the\pgf@y/1cm}
        \pgfmathsetmacro{\grd@xc}{\grd@xa + \pgfkeysvalueof{/tikz/grid with coordinates/major step}}
        \pgfmathsetmacro{\grd@yc}{\grd@ya + \pgfkeysvalueof{/tikz/grid with coordinates/major step}}
        \foreach \x in {\grd@xa,\grd@xc,...,\grd@xb}
        \node[anchor=north] at (\x,\grd@ya) {\pgfmathprintnumber{\x}};
        \foreach \y in {\grd@ya,\grd@yc,...,\grd@yb}
        \node[anchor=east] at (\grd@xa,\y) {\pgfmathprintnumber{\y}};
      }
    }
  },
  minor help lines/.style={
    help lines,
    step=\pgfkeysvalueof{/tikz/grid with coordinates/minor step}
  },
  major help lines/.style={
    help lines,
    line width=\pgfkeysvalueof{/tikz/grid with coordinates/major line width},
    step=\pgfkeysvalueof{/tikz/grid with coordinates/major step}
  },
  grid with coordinates/.cd,
  minor step/.initial=.2,
  major step/.initial=1,
  major line width/.initial=2pt,
}
\makeatother

\tikzset{
  treenode/.style = {align=center, inner sep=0pt, text centered,
    font=\sffamily},
  arn_n/.style = {treenode, circle, white, font=\sffamily\bfseries, draw=black,
    fill=black, text width=1.5em},% arbre rouge noir, noeud noir
  arn_r/.style = {treenode, circle, red, draw=red, 
    text width=1.5em, very thick},% arbre rouge noir, noeud rouge
  arn_x/.style = {treenode, rectangle, draw=black,
    minimum width=0.5em, minimum height=0.5em}% arbre rouge noir, nil
}

%%% mark
\usepackage{pifont}% http://ctan.org/pkg/pifont
%
%

%\DeclareMathOperator*{\argmin}{arg\,min}
%\DeclareMathOperator*{\argmax}{arg\,max}

%%%%%%%%%%%%%%%%%%%%%%%%%%% 
%%%Bartosz

\usepackage{graphicx}
\usepackage{bm,bbm}% 
\usepackage{braket}
\usepackage{enumerate}
\usepackage{subcaption,caption}
\usepackage{booktabs,multirow}
\usepackage{mathtools,bbding}
\usepackage[percent]{overpic}
\usepackage{microtype}

\usepackage{newpxtext,newpxmath}

\usepackage[most,breakable]{tcolorbox}

\AtBeginDocument{
\heavyrulewidth=.08em
\lightrulewidth=.05em
\cmidrulewidth=.03em
\belowrulesep=.65ex
\belowbottomsep=0pt
\aboverulesep=.4ex
\abovetopsep=0pt
\cmidrulesep=\doublerulesep
\cmidrulekern=.5em
\defaultaddspace=.5em
}

\usepackage{etoolbox}
\hypersetup{linktoc=all}

\DeclareMathOperator{\Tr}{Tr}

\addtolength{\textheight}{3mm}

\newcommand{\R}{\mathcal{R}}

\newcommand{\C}{\mathcal{C}}

\nc{\MIO}{{\text{\rm MIO}}}
\nc{\DIO}{{\text{\rm DIO}}}
\nc{\SIO}{{\text{\rm SIO}}}
\nc{\IO}{{\text{\rm IO}}}

\let\oldproofname\proofname
\renewcommand{\proofname}{\rm\bf{\oldproofname}}

\makeatletter
\renewenvironment{proof}[1][\proofname]{%
  \vspace{-\topsep}
  \pushQED{\qed}
  \normalfont
  \topsep6\p@\@plus6\p@\relax
  \trivlist\item[\hskip\labelsep\bfseries#1\@addpunct{.}]\ignorespaces}{\popQED\endtrivlist\@endpefalse}
\makeatother

%%%%%%%%%%%%%%%%%%%%%%%%%%%%%%%%%%%%%%%%%%%%%%%%%%%%%%%%%%%%%%%%%%%%%%%%%%%%%%%%%%%%%%%%%%%%%%%%%%%%%%%%%%%%%%%%%%%

\begin{document}

\title{Inequivalent Multipartite Coherence Classes and New Coherence Monotones}

%%%%%%%%%%%%%%%%%%%%%%%%
\author{Yu Luo}
\email{luoyu@snnu.edu.cn}
\affiliation{College of Computer Science, Shaanxi Normal University, Xi'an, 710062, China}
\affiliation{Centre for Quantum Software and Information, Faculty of Engineering and Information Technology, University of Technology Sydney, Australia}

\author{Yongming Li}
\email{liyongm@snnu.edu.cn}
\affiliation{College of Computer Science, Shaanxi Normal University, Xi'an, 710062, China}

\author{Min-Hsiu Hsieh}
\email{min-hsiu.hsieh@uts.edu.au}
\affiliation{Centre for Quantum Software and Information, Faculty of Engineering and Information Technology, University of Technology Sydney, Australia}

%%%%%%%%%%%%%%%%%%%%%%%%%

\begin{abstract}
Quantum coherence has received significant attention in recent years, but its study is mostly conducted in single party settings. In this paper, we generalize important results in multipartite entanglement theory to their counterparts in quantum coherence theory. First, we give a necessary and sufficient condition for when two pure multipartite states are equivalent under local quantum incoherent operations and classical communication (LICC), i.e., two states can be deterministically transformed to each other under LICC operations. Next, we investigate and give the conditions in which such a transformation succeeds only stochastically.  Different from entanglement case for two-qubit states, we find that the SLICC equivalence classes are infinite. Moreover, it's possible that there are some classes of states in multipartite entanglement can convert into each other, while, they cannot convert into each other in multipartite coherence. In order to show the difference among SLICC classes, we introduce two coherence monotones: accessible coherence and source coherence, following the logistics given in [\emph{Phys.~Rev.~Lett. 115,~150502 (2015)}]. These coherence monotones have a straightforward operational interpretation, namely, the accessible coherence characterizes the proficiency of a state to generate other states via quantum incoherent operations, while the source coherence characterizes the set of states that can be reached via quantum incoherent operations acting on the given state of interest. 
\end{abstract}

%\date{\today}
\maketitle

\section{Introduction}
Coherence originates from the ``superposition'' of quantum states, and plays a central role in interference phenomena in quantum physics and quantum information science~\cite{Baumgratz14, Shi1705, Streltsov1705,Streltsov1502,Chitambar16,Streltsov_prx,Du15,Shao15,Xi15,Hu17,Winter16,Chitambar16_PIO}. Coherence is an essential ingredient for multipartite entanglement in many-body systems and a necessary phenomenon of analysing physical phenomena in quantum optics\cite{PhysRevLett.94.173602}, solid state physics\cite{Deveaud2009}, and nanoscale thermodynamics\cite{Cwiklinski2014,Lostaglio2015,Karlstrom2011}, even in biological systems\cite{Lloyd2011,Huelga2013,Roden2016}. A mathematical framework of quantum coherence as a physical resource has been proposed recently~\cite{Baumgratz14}. There are two basic elements in coherence theory: (1) free states and (2) free operations. Free states in the coherence theory are those states which are diagonal in a fixed basis $\{\ket{i}\}$, which we call incoherent states. Free operations (incoherent operations) are some specified classes of physically realizable operations that act invariantly on the set of incoherent states, which is not unique due to practical implications.

%Local operations assisted by classical communications (LOCC) is helpful for understanding the structure of entangled states because entanglement cannot be created by LOCC operators. In analogy with entanglement theory, for multipartite coherent states, we focus on the protocol that each party performs local incoherent operations assisted by classical communications (LICC)~\cite{Chitambar16,Streltsov_prx}. In LICC protocol, the local incoherent operator cannot create global coherence so that multipartite coherence remains a resource. LICC has been used in exploring the interplay between coherence and entanglement as resource primitives in quantum information theory~\cite{Chitambar16,Streltsov_prx,Matera2016,Egloff18}. Meanwhile, LICC can be viewed as a natural setting to explore the structure of a multipartite coherent states. 

Local operations assisted by classical communications (LOCC) is helpful for understanding the structure of entangled states because entanglement cannot be created by LOCC operators. In analogy with entanglement theory, we focus on the protocol that each party performs local incoherent operations assisted by classical communications (LICC)~\cite{Chitambar16,Streltsov_prx}. In LICC protocol, the local incoherent operator cannot create global coherence so that multipartite coherence remains a resource. Therefore, LICC can be viewed as a natural setting to explore the structure of a multipartite coherent states. 

The first aim in this paper is to understand the structure of a multipartite coherent states. For quantum information processing, different structures of states often have different capabilities in state transformation. Just like the resource of entanglement, a LOCC protocol for multipartite state leads to natural ways of defining equivalent relations in the set of entangled states, as well as establishing hierarchies between the resulting classes (structures). States in different classes cannot convert into each other. For example, GHZ states and W states belong to two different classes via stochastic local operations and class communications (SLOCC), which reveals the existence of two inequivalent kinds of genuine tripartite entanglement~\cite{Dur00}. To understand the structure of multipartite coherent states, we also consider the classification of pure coherent states of multipartite quantum systems under LICC, even in a stochastic setting. For multipartite coherence, an interesting observation is: if one of the three qubits in W state is lost, the state of the remaining two-qubit system is still coherent, whereas the GHZ state, which is incoherent after the loss of one qubit. This observation leads to following question: is it possible that there are some classes of states in multipartite entanglement can convert into each other, while, they cannot convert into each other in multipartite coherence? 

We address the questions above by focusing on the equivalent class of pure multipartite coherent states. First, a necessary and sufficient condition for pure multipartite state transformations via local incoherent operations and class communications (LICC) is presented. Second, we investigate and give the conditions in which such a transformation succeeds only stochastically, namely, stochastic local incoherent operations and class communications (SLICC). As an application, we investigate the two-qubit SLICC equivalent classes, showing that equivalent classes of bipartite coherence under SLICC are more complex than equivalent classes of bipartite entanglement under SLOCC, even in the two-qubit case. This specific incoherent constraint of locality makes us undertand the structure of multipartite coherence. 

%Because the different structure of multipartite coherent states have different capacities in state transformation task via LICC and many of coherence measures do not have an interpretation in the context of the LICC paradigm~\cite{Streltsov17_rmp}.

%In order to quantify coherence, possible LICC transformations among multipartite coherent states have to be further investigated with the intention to identify new operational coherent measures. 

Because many coherence measures do not have an interpretation in the context of the LICC paradigm~\cite{Streltsov17_rmp}. The another aim in this paper is to identify new operational coherence measures for investigating possible LICC transformations among coherent multipartite states. Inspired by similar concepts previously investigated for entanglement~\cite{Schwaiger15}, we introduce two coherence monotones: accessible coherence and source coherence. Accessible coherence refers to the proficiency of a state to generate other states via free operations, and the source coherence denotes that the set of a state of interest can be obtained via free operations. Both coherence monotones can also be used for single-qubit and multipartite state cases and can be applied for other free operations, such as incoherent operations (IC), LICC. For single-qubit states achieved via physically incoherent operations (PIO)~\cite{Chitambar16_PIO}, strictly incoherent operations (SIO)~\cite{Winter16} and IC, we obtain explicit formulas of accessible coherence and source coherence. We analyze pure (or mixed) states via IC and derive explicit formulas for the source coherence. These two new coherence monotones also have a geometric interpretation. We additionally show how accessible coherence can be computed numerically and provide examples.

%%%%%%%%%%%%%%%%%%%%%%%%%%%%%%%%%%%%%%%%%%%%
%%%%%%%%%%%%%%%%%%%%%%%%%%%%%%%%%%%%%%%%%%%%
%%%%%%%%%%%%%%%%%%%%%%%%%%%%%%%%%%%%%%%%%%%%

\section{Inequivalent Classes of Multipartite Coherence States}\label{sec:multiclass}
In this section, we will explore inequivalent classes of multipartite coherence states. First, we will give a necessary and sufficient condition for when two multipartite coherence states can be interconverted with certainty under local incoherent operations and classical communication (LICC). We then study the interconversion of multipartite coherence states which only succeeds with a strictly positive probability. This allows us to define inequivalent classes of multipartite coherence states. The discussion in this section closely follows  the inequivalent classes of multipartite entanglement states in \cite{Bennett00, Dur00}.

We start with the following lemma showing that the number of product terms of a multipartite coherent pure state is an SLICC monotone. Note that the number of product terms has been shown by other authors to be IC monotone ~\cite{Winter16,Killoran16}. We extend this idea and show that the number of product terms is also an SLICC monotone for multipartite coherence. We say that two pure states $\ket{\psi}$ and $\ket{\phi}$ are $\mathcal{O}$ equivalent if they can be transformed into each other by means of operations in the set $\mathcal{O}$.  
\begin{lemma}[LICC and SLICC monotone]\label{lemma_corank}
\label{lem:product}
The number of non-zero product terms in the fixed basis does not increase under LICC (resp.~SLICC). 
\end{lemma}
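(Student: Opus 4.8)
The plan is to reduce everything to the behaviour of a single local incoherent Kraus operator on the product-basis expansion, and then bootstrap to full protocols. Write a multipartite pure state in the fixed product basis as $\ket{\psi}=\sum_{\vec{\imath}} c_{\vec{\imath}}\,\ket{i_1}\ox\cdots\ox\ket{i_n}$, where $\vec{\imath}=(i_1,\dots,i_n)$, and let $N(\psi)=\#\{\vec{\imath}:c_{\vec{\imath}}\neq 0\}$ denote the number of non-zero product terms. The first step is to recall the defining structure of an incoherent Kraus operator $K$: it has at most one non-zero entry in each column, equivalently $K\ket{i}=\alpha_i\ket{f(i)}$ for some function $f$ on basis labels and scalars $\alpha_i\in\CC$ (some possibly zero). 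A local incoherent operation on party $A$ is such a $K_A$ tensored with the identity on the remaining parties (together with, at most, appending an incoherent ancilla, which merely relabels terms bijectively and so leaves $N$ unchanged).

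Second, I would prove the one-operator estimate $N\big((K_A\ox \id)\ket{\psi}\big)\le N(\psi)$. Applying $K_A\ox\id$ sends each product term $\ket{i_1}\ox\cdots$ to $\alpha_{i_1}\ket{f(i_1)}\ox\ket{i_2}\ox\cdots$, so collecting terms gives new coefficients indexed by $(f(i_1),i_2,\dots,i_n)$. The crucial observation is that $\vec{\imath}\mapsto (f(i_1),i_2,\dots,i_n)$ is a \emph{function} on multi-indices, hence the image of the support $\{\vec{\imath}:c_{\vec{\imath}}\neq0\}$ has cardinality at most $N(\psi)$; terms with $\alpha_{i_1}=0$ are simply deleted, and any destructive interference when several input labels are mapped to the same output label can only decrease the count further. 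Thus a single local incoherent Kraus operator never increases $N$, and by symmetry the same holds for any party. Normalisation is irrelevant here since rescaling by a non-zero scalar does not change the support.

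Third, I would assemble the protocol. A single branch of an SLICC protocol that maps $\ket{\psi}$ to $\ket{\phi}$ with positive probability corresponds to a composition $K=K^{(r)}\cdots K^{(1)}$ of local incoherent Kraus operators (each acting on one party, the choice of $K^{(t)}$ allowed to depend on previously communicated classical outcomes), with $\ket{\phi}\propto K\ket{\psi}$. Iterating the one-operator estimate along the $r$ rounds yields $N(\phi)\le N(\psi)$, proving the SLICC statement. The LICC (deterministic) case then follows immediately, since a deterministic transformation is in particular a stochastic one succeeding with probability $1>0$: picking any single branch reproduces the target state $\ket{\phi}$ (up to a basis-permuting local incoherent unitary, which preserves $N$) and the same bound applies.

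The part requiring the most care is the bookkeeping in the second step: one must verify that classical communication cannot be exploited to increase $N$---it only selects which (still incoherent) local map is applied in each branch---and that the function-image argument correctly subsumes both the deletion of terms (vanishing $\alpha_{i_1}$) and the merging-with-cancellation of distinct input labels. I expect no genuine difficulty beyond making this accounting precise and confirming that appending incoherent ancillas and the tensor-product action on untouched parties preserve the product-basis form throughout.
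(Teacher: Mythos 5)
Your proof is correct and takes essentially the same route as the paper's: apply a single local incoherent Kraus operator $F^{(k)}=\sum_i c_i^{(k)}\ket{j(i)}\bra{i}$ to one party, observe that the induced relabelling of product-basis multi-indices is a function whose image can only merge or delete terms, and iterate over the rounds of the protocol. Your write-up is more careful than the paper's (explicit function-image bookkeeping, cancellation, ancillas, and the role of classical communication), but the underlying argument is identical.
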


The proof can be found in Supplemental Material \ref{sec:proof_corank}.

\begin{remark}
The number of product terms in the fixed basis is similar to Schmidt rank in the resource theory of entanglement, but they do not play the same roles. Schmidt rank, being an entanglement monotone under SLOCC, can be used to classify the structures of entanglement, and this classification is complete for bipartite settings. In other words, two pure entangled states are SLOCC equivalent iff they have the same Schmidt rank.  However, the number of product terms alone, despite being an SLICC monotone, is not sufficient to classify all the structures of multipartite pure coherent states. Example~\ref{ex:SLICC2} below shows that two pure states can be SLICC inequivalent even though they have the same number of product terms.
\end{remark}

Our first main result is the following, which originates from its entanglement counterpart in \cite[Corollary 1]{Bennett00}.

\begin{theorem}\label{The_LICCeq}
Two multipartite pure states $\ket{\psi}$ and $\ket{\phi}$ are LICC equivalent iff they are local IU (LIU) equivalent.
\end{theorem}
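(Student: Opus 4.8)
The plan is to prove the two directions separately; the forward implication is immediate, while the converse carries all the technical weight.

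For the easy direction, suppose $\ket{\phi} = (U_1 \ox \cdots \ox U_n)\ket{\psi}$ with each $U_k$ a local incoherent unitary. Since an incoherent unitary is a trace-preserving, single-Kraus incoherent operation requiring no communication, the product $U_1\ox\cdots\ox U_n$ is a valid deterministic LICC map, and the inverses $U_k^{\dagger}$ (again incoherent unitaries) implement the reverse transformation. Hence LIU equivalence implies LICC equivalence.

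For the converse I would first put a deterministic LICC protocol into a standard form. Every LICC channel is in particular a separable incoherent operation, so grouping the measurement branches and absorbing the outcome-dependent local incoherent corrections, the transformation $\ket{\psi}\to\ket{\phi}$ is carried by Kraus operators $M_\lambda = M_\lambda^{(1)}\ox\cdots\ox M_\lambda^{(n)}$, each factor an incoherent Kraus operator, with $M_\lambda\ket{\psi}=\sqrt{p_\lambda}\,\ket{\phi}$, $\sum_\lambda p_\lambda = 1$, together with the local completeness relations. LICC equivalence also supplies reverse operators $N_\mu = N_\mu^{(1)}\ox\cdots\ox N_\mu^{(n)}$ with $N_\mu\ket{\phi}=\sqrt{q_\mu}\,\ket{\psi}$ and $\sum_\mu q_\mu = 1$.

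The heart of the argument is a fixed-point observation. The composite channel that runs the protocol forward and then backward is an LICC channel fixing $\ket{\psi}\bra{\psi}$; since $\ket{\psi}\bra{\psi}$ has rank one, each composite Kraus operator must preserve it up to a scalar, giving $N_\mu M_\lambda\ket{\psi}=\sqrt{p_\lambda q_\mu}\,\ket{\psi}$. Thus every $M_\lambda$ is invertible along the $\psi\to\phi$ direction, with $N_\mu$ supplying the inverse. To upgrade this to the local level I would invoke Lemma~\ref{lem:product}: the number $N$ of product terms is an SLICC, hence LICC, monotone, so $\ket{\psi}$ and $\ket{\phi}$ share the same $N$, and because each branch satisfies $M_\lambda\ket{\psi}\propto\ket{\phi}$ with $\ket{\phi}$ carrying exactly $N$ terms, no branch may merge or annihilate terms. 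Since an incoherent Kraus operator sends each basis vector to a scaled basis vector through some function $f_\lambda^{(k)}$, this count preservation forces each $f_\lambda^{(k)}$ to restrict to an injective map (a partial permutation) on the local support of $\ket{\psi}$ with nonvanishing coefficients. Feeding the reverse operators $N_\mu$ and the completeness relations back in then pins the moduli of these coefficients to one, so that each $M_\lambda^{(k)}$ acts on the local support as a permutation dressed by phases, i.e.\ as an incoherent isometry. Extending this support-restricted map to a full local incoherent unitary $U_k$ (by completing the partial permutation arbitrarily on the unused indices) yields $\ket{\phi}=(U_1\ox\cdots\ox U_n)\ket{\psi}$, as desired.

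The step I expect to be the main obstacle is passing from the joint invertibility of $M_\lambda$ to the injectivity and unit-modulus property of each tensor factor individually. The classical communication correlates the parties, so the index $\lambda$ does not split into independent local choices, and I must rule out the possibility that the product map on the joint support is injective while the individual functions $f_\lambda^{(k)}$ merge basis vectors in a way that is masked by the support structure or by fortuitous cancellations. Controlling these simultaneously across all tensor factors is precisely where the interplay of Lemma~\ref{lem:product}, the reverse operators $N_\mu$, and the completeness relations must be orchestrated; once each factor is known to be a partial permutation with unit-modulus weights, the extension to genuine local incoherent unitaries is routine.
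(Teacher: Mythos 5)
Your forward direction is fine, but the converse contains a genuine gap, and it sits exactly where you flag ``the main obstacle'': the passage from the branch relations $M_\lambda\ket{\psi}=\sqrt{p_\lambda}\ket{\phi}$ to injectivity and unit modulus of each local factor $M_\lambda^{(k)}$. The product-term count of Lemma~\ref{lem:product} cannot carry this step on its own. For instance, with $\ket{\psi}=a\ket{00}+b\ket{11}$, let party $1$ apply an incoherent Kraus operator whose relabeling merges $0$ and $1$ into $0$ while party $2$ applies an injective relabeling $g$: the output $ac_0d_0\ket{0\,g(0)}+bc_1d_1\ket{0\,g(1)}$ still has two nonzero product terms, so the count is preserved even though the local map is many-to-one (what has changed is the entanglement across the cut, a quantity your argument never introduces). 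Similarly, the assertion that the reverse operators $N_\mu$ and the completeness relations ``pin the moduli to one'' is not derived: the identity $N_\mu M_\lambda\ket{\psi}=\sqrt{p_\lambda q_\mu}\ket{\psi}$ constrains only the action on the single vector $\ket{\psi}$, and extracting from it, together with $\sum_\lambda M_\lambda^{(k)\dagger}M_\lambda^{(k)}=I$, that \emph{every} branch acts as a phase-dressed permutation on the local support is precisely the nontrivial content of the theorem. As written, the proof defers its central step to an unexecuted ``orchestration.''

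The paper closes exactly this gap by a different mechanism, imported from Bennett \emph{et al.}: because the protocol is reversible, the average von Neumann entropy of the marginal $\rho^{\psi}_{A_2\cdots A_N}$ must be conserved under Alice's measurement, and strict concavity of the entropy then forces every branch to leave that marginal unchanged, i.e.\ $\ket{\psi_k}=U_k^{A_1}\otimes I\,\ket{\psi}$ for a genuine unitary $U_k^{A_1}$; only afterwards is the incoherence of the Kraus operators used to conclude that $U_k^{A_1}$ is an incoherent unitary. Some such global, branch-by-branch rigidity statement (entropy conservation plus strict concavity, or an equivalent majorization argument) is what your sketch is missing; if you want to keep your combinatorial route, you would need to supplement the product-term monotone with a monotone sensitive to the correlations across each cut and then actually carry out the modulus computation from the completeness relation.
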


The proof can be found in Supplemental Material \ref{sec:proof_LICCeq}.

It is also possible that two multipartite pure states $\ket{\psi}$ and $\ket{\phi}$ cannot always succeed with certainty in interconverting through operations in the class $\mathcal{O}$, i.e., such a transformation may only succeed stochastically. This allows the structure of multipartite coherence states to be understood operationally (similar to that of multipartite entangled states): if two multipartite pure states $\ket{\psi}$ and $\ket{\phi}$ cannot be transformed to each other with non-zero probability, they must each belong to different types of multipartite coherence structures.

In the remainder of this section, we will focus on $SLICC$. Next we give our main result:

\begin{theorem}\label{The_LSICC}
Two multipartite pure states $\ket{\psi}$ and $\ket{\phi}$ are equivalent under SLICC if and only if they are related by local (invertible) SIO operators.
\end{theorem}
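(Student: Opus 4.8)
The plan is to mirror the SLOCC pure-state classification of \cite{Dur00,Bennett00}, upgrading ``invertible local operator'' to ``invertible strictly incoherent operator'' by exploiting the incoherent structure of the allowed Kraus operators together with Lemma~\ref{lem:product}. I write $\ket{\psi}=\sum_{\vec{\imath}}\psi_{\vec{\imath}}\ket{\vec{\imath}}$ with $\vec{\imath}=(i_1,\dots,i_k)$, let $T_\psi$ be its set of nonzero product terms, and let $\supp_j(\psi)=\{i_j:\vec{\imath}\in T_\psi\}$ be the local basis indices of party $j$ that appear.

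($\Leftarrow$) If $\ket{\phi}=(A_1\otimes\cdots\otimes A_k)\ket{\psi}$ with each $A_j$ an invertible strictly incoherent operator, then $A_j=D_jP_j$ for a permutation $P_j$ and an invertible diagonal $D_j$, so $A_j^{\dagger}A_j$ is diagonal (hence incoherent). Each party $j$ implements the local incoherent instrument with Kraus operators $A_j/s_j$ and $\sqrt{\id-A_j^{\dagger}A_j/s_j^2}$, where $s_j\ge\|A_j\|$; both are incoherent (the first strictly so, the second diagonal), and the constraint $s_j\ge\|A_j\|$ guarantees positivity of the second. The branch in which every party obtains the first outcome occurs with nonzero probability and produces $\ket{\phi}$. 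Since $A_j^{-1}=P_j^{-1}D_j^{-1}$ is again invertible strictly incoherent, the same construction run with $A_j^{-1}$ sends $\ket{\phi}$ back to $\ket{\psi}$; hence the two states are SLICC equivalent.

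($\Rightarrow$) For the converse I fix a successful branch of a LICC protocol sending $\ket{\psi}$ to $\ket{\phi}$ and a successful branch of one sending $\ket{\phi}$ to $\ket{\psi}$. Conditioning on the full classical transcript, the action of each party along a branch is a fixed product of local incoherent Kraus operators, hence a single incoherent operator; since operators on distinct tensor factors commute, the branch operators factor as $\bigotimes_j K^{(j)}$ and $\bigotimes_j L^{(j)}$ with all $K^{(j)},L^{(j)}$ incoherent, and $\ket{\phi}\propto\bigotimes_j K^{(j)}\ket{\psi}$, $\ket{\psi}\propto\bigotimes_j L^{(j)}\ket{\phi}$. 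Writing $K^{(j)}\ket{i}=c_{j,i}\ket{f_j(i)}$, the forward relation expands as a sum over $T_\psi$ governed by the multi-index map $\vec{f}=(f_1,\dots,f_k)$. By Lemma~\ref{lem:product} the number of product terms is an SLICC invariant, so $|T_\psi|=|T_\phi|$; this forces $c_{j,i_j}\ne0$ for every $\vec{\imath}\in T_\psi$ (no term is annihilated) and $\vec{f}$ to be injective on $T_\psi$ with image $T_\phi$ (no two terms merge). The same holds for the map $\vec{g}=(g_1,\dots,g_k)$ built from the $L^{(j)}$.

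The crux is to promote these coordinatewise maps from injective-on-$T$ to injective on each local index set, which is exactly what separates strictly incoherent from merely incoherent operators. This follows from a short counting argument: since $\vec{f}$ bijects $T_\psi$ onto $T_\phi$, its $j$-th coordinate gives $\supp_j(\phi)=f_j(\supp_j(\psi))$, whence $|\supp_j(\phi)|\le|\supp_j(\psi)|$; applying the same to $\vec{g}$ yields the reverse inequality, so $|\supp_j(\psi)|=|\supp_j(\phi)|$ and therefore $f_j$ is injective on $\supp_j(\psi)$. Consequently $K^{(j)}$ restricted to $\sspan\{\ket{i}:i\in\supp_j(\psi)\}$ is a nonzero diagonal times a partial permutation, i.e.\ injective and strictly incoherent. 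Finally I extend each $f_j$ to a full permutation of $\{1,\dots,d_j\}$ and assign arbitrary nonzero diagonal entries off the support (possible because $|\supp_j(\psi)|=|\supp_j(\phi)|$, so the complements match in size); this produces an invertible strictly incoherent $A_j$ agreeing with $K^{(j)}$ on $\supp_j(\psi)$, so that $\ket{\phi}\propto(A_1\otimes\cdots\otimes A_k)\ket{\psi}$, and the global proportionality constant is absorbed into one $D_j$. I expect the main obstacle to be the bookkeeping of this last step --- justifying the per-branch factorization into incoherent tensor factors and verifying that the off-support extension can always be chosen invertible --- rather than any deep analytic input.
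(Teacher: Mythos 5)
Your proof is correct and takes essentially the same route as the paper: the forward direction realizes each $A_j$ as one Kraus operator of a local incoherent measurement completed by a positive remainder, and the converse composes the Kraus operators along a successful branch into a single local incoherent operator per party and uses Lemma~\ref{lem:product} to force these operators to be invertible and strictly incoherent. Your support-counting and off-support extension argument is a careful filling-in of the step the paper asserts in a single sentence (``$A_k$ must be full rank since the number of non-zero product terms must be equal'').
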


The proof can be found in Supplemental Material \ref{sec:proof_SLICC}.

To conclude this section, we use an example to demonstrate that the $N$-partite coherence states are already more versatile  than entanglement when $N=2$.

\begin{example}\label{ex:SLICC2} [Characterization of two-qubit coherence states]
Consider a two-qubit system with the fixed basis $\{ \ket{00},\ket{01},\ket{10},\ket{11}\}$. Our classification is based on the number of product terms $R$ in a two-qubit pure state since this number will not be altered by invertible SIO operators. The following table lists all inequivalent classes of two-qubit states. 

\begin{center}
\begin{tabular} { | c | c | c | p{2.5cm} |} 
\hline  
R & Classification \\ \hline  
1 & $\ket{00}$ \\ \hline 
2 & $a\ket{00}+b\ket{01}$, $a\ket{00}+c\ket{10}$, $a\ket{00}+d\ket{11}$\\ \hline 
3 & $a\ket{00}+b\ket{01}+c\ket{10}$, $a\ket{00}+b\ket{01}+d\ket{11}$\\ \hline  
4 & infinitely many (based on different $\mathcal{r}$)\\ \hline  
\end{tabular}
\end{center}
\end{example}

The detailed analysis can be found in Supplemental Material \ref{sec:proof_ex_SLICC2}. We can show that it's possible that there are some classes of states in multipartite entanglement can convert into each other, while, they cannot convert into each other in multipartite coherence even in two-qubit case.

From the example above, we can see SLICC equivalent class of $\ket{\psi}=a\ket{00}+b\ket{01}+c\ket{10}+d\ket{11}$ can form an one-parameter family of states. 
\begin{corollary} \label{Cor_SLICC}
Any state $\ket{\psi}=a\ket{00}+b\ket{01}+c\ket{10}+d\ket{11}$ which the number of product terms equals 4 is SLICC equivalent to a state with form $\ket{\psi'}=\alpha(\ket{00}+\ket{01}+\ket{10})+\beta\ket{11}$, where $\beta$ is any complex number with $0<|\beta|<1$, and $\alpha$ is the real number determined by normalization. That is, SLICC equivalent class of $\ket{\psi}=a\ket{00}+b\ket{01}+c\ket{10}+d\ket{11}$ can form an one-parameter family of states. 
\end{corollary}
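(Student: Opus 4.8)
The plan is to apply Theorem~\ref{The_LSICC}, which says that $\ket{\psi}$ and $\ket{\psi'}$ are SLICC equivalent precisely when they are related by local invertible SIO operators; the whole corollary then reduces to explicitly constructing such an operator. First I would record the structure of invertible SIO operators on a qubit: an incoherent Kraus operator maps each basis vector to a multiple of a single basis vector, so an invertible one is a monomial (generalized-permutation) matrix, and the extra SIO requirement on $K^{\dagger}$ keeps it monomial. For this reduction it suffices to use the diagonal subfamily, i.e.\ operators $A\otimes B$ with $A=\diag(a_0,a_1)$, $B=\diag(b_0,b_1)$ and all four entries nonzero; these are local invertible SIO maps, and since SLICC is stochastic I am free to rescale by any nonzero scalar and renormalize at the very end.

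Next I would compute the action on coefficients. Writing $\ket{\psi}=\sum_{i,j}M_{ij}\ket{ij}$ with $M=\left(\begin{smallmatrix}a&b\\c&d\end{smallmatrix}\right)$, the map $A\otimes B$ sends $M\mapsto AMB^{\mathsf T}=\left(\begin{smallmatrix}a_0b_0a & a_0b_1b\\ a_1b_0c & a_1b_1d\end{smallmatrix}\right)$. To reach the target form I require the $(0,0),(0,1),(1,0)$ entries to share a common value $\alpha$ and call the remaining entry $\beta$. Equality of the first three forces $b_1/b_0=a/b$ and $a_1/a_0=a/c$, which is always solvable because $a,b,c,d\neq 0$ (this is exactly where the hypothesis of four product terms enters), and the two still-free scalars $a_0,b_0$ can be chosen so that $\alpha=a_0b_0a$ is real and positive.

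Substituting these ratios into the last entry gives $\beta=a_1b_1d=\alpha\,ad/(bc)$, so the dependence on the input collapses to the single complex invariant $r:=ad/(bc)$. After renormalizing I would obtain $\alpha=1/\sqrt{3+|r|^2}$ (still real positive) and $\beta=r/\sqrt{3+|r|^2}$, which is exactly the asserted form. Since $|\beta|=|r|/\sqrt{3+|r|^2}$ increases from $0$ to $1$ as $|r|$ runs over $(0,\infty)$ while $\arg\beta=\arg r$ is unconstrained, letting $\ket{\psi}$ range over all four-term states makes $\beta$ sweep the entire punctured disk $0<|\beta|<1$; this is the one-parameter family in the statement and exhibits a continuum of SLICC classes.

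The step needing the most care is not the algebra but the justification that these diagonal monomial matrices are genuinely admissible local invertible SIO operators and that the stochastic (non-trace-preserving) nature of SLICC permits the free rescaling used to make $\alpha$ real and positive. If one additionally wanted that distinct $\beta$ label genuinely distinct classes, the last point to check is that the full monomial group (now including the two qubit swaps) acts on the invariant only by $r\mapsto r$ and $r\mapsto 1/r$, so the set of classes remains a continuum.
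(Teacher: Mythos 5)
Your proposal is correct and follows essentially the same route as the paper: both reduce the claim via Theorem~\ref{The_LSICC} to exhibiting explicit local diagonal (monomial) SIO operators, which the paper simply writes down and you derive systematically. You are in fact more careful than the paper on one point --- the paper's four displayed operators produce the target state only under the unstated constraint $\beta/\alpha = ad/(bc)$, whereas you make explicit that $\beta$ is pinned down by the invariant $r = ad/(bc)$ and then verify that $|\beta|$ sweeps $(0,1)$ as $|r|$ ranges over $(0,\infty)$.
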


The proof can be found in Supplemental Material \ref{sec:proof_Cor_SLICC}.

%\begin{proof}
%As discussed above, we find the following 4 operators
%\begin{eqnarray}
%\begin{pmatrix} \frac{\alpha}{b\beta} & 0   \\ 0 & \frac{1}{d} \end{pmatrix}\otimes\begin{pmatrix} \frac{d\alpha}{c} & 0   \\ 0 & \beta \end{pmatrix},
%\end{eqnarray}
%\begin{eqnarray}
%\begin{pmatrix} \frac{\alpha}{b\beta} & 0   \\ 0 & \frac{1}{c} \end{pmatrix}\otimes\begin{pmatrix} 0 & \frac{c\alpha}{d}   \\ \beta & 0 \end{pmatrix},
%\end{eqnarray}
%\begin{eqnarray}
%\begin{pmatrix} 0 & \frac{\alpha}{b\beta}   \\ \frac{1}{b} & 0 \end{pmatrix}\otimes\begin{pmatrix} \frac{b\alpha}{a} & 0   \\ 0 & \beta \end{pmatrix},
%\end{eqnarray}
%\begin{eqnarray}
%\begin{pmatrix} 0 & \frac{\alpha}{c\beta}   \\ \frac{1}{a} & 0 \end{pmatrix}\otimes\begin{pmatrix} 0 & \frac{a\alpha}{b}   \\ \beta & 0 \end{pmatrix},
%\end{eqnarray}
%can transform $\ket{\psi}=a\ket{00}+b\ket{01}+c\ket{10}+d\ket{11}$ to $\ket{\psi'}=\alpha(\ket{00}+\ket{01}+\ket{10})+\beta\ket{11}$, where $0<|\beta|<1$ and $\alpha=\frac{\sqrt{1-|\beta|^2}}{3}>0$. Thus, the state $\ket{\psi}=a\ket{00}+b\ket{01}+c\ket{10}+d\ket{11}$ is SLICC equivalent to the state $\ket{\psi'}=\alpha(\ket{00}+\ket{01}+\ket{10})+\beta\ket{11}$ via those operators above. 
%\end{proof}

\begin{remark}\label{re:two-qubits}
In entanglement theory, the degree of entanglement can be measured by concurrence $C_E$~\cite{Wootters98}. If we write $\ket{\psi}=a\ket{00}+b\ket{01}+c\ket{10}+d\ket{11}$, then $C_{E}(\ket{\psi})=2|ad-bc|$~\cite{Wootters98}. Suppose that $\ket{\psi}$ and $\ket{\phi}$ are SLOCC equivalent, then there exist local invertible operators $A$ and $B$, such that $\ket{\phi}=A\otimes B\ket{\psi}$~\cite{Dur00}. Consequently, $C_{E}(\ket{\phi})=det(A)det(B)C_{E}(\ket{\psi})$. There are only two SLOCC equivalent classes: $C_{E}(\ket{\psi})=0$ and $C_{E}(\ket{\psi})\neq0$. Equivalently, the classification depends on either  $ad=bc$ or $ad\neq bc$. The classification of SLICC is different from the classification of SLOCC in entanglement theory. As shown in the two-qubit case, SLICC classification depends both on the number of product terms and the number $\mathcal{r}=\frac{ad}{bc}$. Because SLICC equivalent class of $\ket{\psi}=a\ket{00}+b\ket{01}+c\ket{10}+d\ket{11}$ can form an one-parameter family of states, we can further simplify $\mathcal{r}$ as $\mathcal{r}=\frac{\beta}{\alpha}=\frac{3\beta}{\sqrt{1-|\beta|^2}}$.
\end{remark}

%\begin{figure}
%\centering
%\includegraphics[width=0.48\textwidth]{slicc2qubit.eps}
%\caption{(color online). As shown in this figure, we have plotted two SLICC inequivalent classes for $\mathcal{r}=9$ (red) and $\mathcal{r}=1$ (green). We have plotted the edges of the class of $\mathcal{r}=9$ (yellow) and shown that no intersection of these two classes exists.  }
%\label{SLICC2qubit}
%\end{figure}

%%%%%%%%%%%%%%%%%%%%%%%%%%%%%%%%%%%%%%%%%%%%%%%%%%%%
%%%%%%%%%%%%%%%%%%%%%%%%%%%%%%%%%%%%%%%%%%%%%%%%%%%%
%%%%%%%%%%%%%%%%%%%%%%%%%%%%%%%%%%%%%%%%%%%%%%%%%%%%

\section{New operational coherence monotones: Accessible coherence and Source coherence}\label{sec:ASC}

In this section, we will recall the framework for quantifying the resource of coherence theory and then introduce two new operational coherence monotones: accessible coherence and source coherence. Our idea comes from Schwaiger \emph{et al.} \cite{Schwaiger15} and Sauerwein \emph{et al.} \cite{Sauerwein15}, in which the authors studied similar entanglement measures: accessible entanglement and source entanglement.  

Baumgratz $et$ $al.$~\cite{Baumgratz14} proposed a seminal framework for quantifying coherence as a resource. For a fixed basis $\{\ket{i}\}$, a functional $C$ can be taken as a coherence measure if it satisfies the following four conditions:

$(B1)$ $C(\rho)\geq0$ for all quantum states, and $C(\rho)=0$ if $\rho\in\mathcal{I}$, where $\mathcal{I}$ is the set of incoherence states which are diagonal in basis $\{\ket{i}\}$;

$(B2)$ $C(\rho)\geq C(\Phi(\rho))$ for all free operations $\Phi$;

$(B3)$ $C(\rho)\geq \sum_np_nC(\rho_n)$, where $p_n=\Tr(K_n\rho K_n^\dag)$, $\rho_n=\frac{1}{p_n}K_n\rho K_n^\dag$ and $K_n$ are the Kraus operators of an incoherent $CPTP$ map $\Phi(\rho)=\sum_nK_n\rho K_n^\dag$;

$(B4)$ $\sum_ip_iC(\rho_i)\geq C(\rho)$ for $\rho=\sum_ip_i\rho_i$.

Similar to entanglement, the function $C$ is a coherence monotone if it satisfies condition $(B1)$ and $(B2)$.

\subsection{Accessible coherence and Source coherence}\label{sec:ACAS}

For a given state $\rho$ in a Hilbert space $\mathcal{H}$ with finite dimension $d$, we denote by $M^{\mathcal{O}}_a(\rho)$ the set of states that can be reached from $\rho$ via free operations in the set $\mathcal{O}$, and denote by $M^{\mathcal{O}}_s(\rho)$ the set of states that can reach $\rho$ via free operations in $\mathcal{O}$. 
We define two related magnitudes: the \emph{accessible volume}, $V^{\mathcal{O}}_a(\rho)=\mu(M^{\mathcal{O}}_a(\rho))$, which quantifies the volume of states that can be reached by the state $\rho$, and the \emph{source volume}, $V^{\mathcal{O}}_s(\rho)=\mu(M^{\mathcal{O}}_s(\rho))$, which quantifies the volume of states that can reach $\rho$ via free operations. Here, $\mu$ could be an arbitrary Lebesgue measure which maps the set of density matrices to non-negative real number such that $\mu(M^{\mathcal{O}}_a(\rho))=0$ and $\mu(M^{\mathcal{O}}_s(\rho))$ reaches the maximally source volume if $\rho$ is an incoherent state. 

The operational meaning is clear: If $M^{\mathcal{O}}_a(\rho)$ is  larger than $M^{\mathcal{O}}_a(\rho')$, then the state $\rho$ could potentially be more useful than $\rho'$ in quantum information-processing applications. On the other hand, if $M^{\mathcal{O}}_s(\rho)$ is too small, then not many states is useful than the state $\rho$ for any potential applications, i.e, $\rho$ is very useful than many other states in applying for the resource of coherence. We can then define the accessible coherence and the source coherence as follows:
\begin{eqnarray}
C^{\mathcal{O}}_a(\rho)=\frac{V^{\mathcal{O}}_a(\rho)}{V_a^{\sup,\mathcal{O}}},
\end{eqnarray}
and
\begin{eqnarray}
C^{\mathcal{O}}_s(\rho)=1-\frac{V^{\mathcal{O}}_s(\rho)}{V_s^{\sup,\mathcal{O}}},
\end{eqnarray}
where $V_a^{\sup,\mathcal{O}}$  ($V_s^{\sup,\mathcal{O}}$) denotes the maximal accessible (source) volume according to the measure $\mu$. 

We will now show that both accessible coherence and source coherence are coherence monotones. 
\begin{theorem}\label{coherencemonotones}
Both accessible coherence and source coherence satisfy the conditions $(B1)$ and $(B2)$, thus they are coherence monotones.
\end{theorem}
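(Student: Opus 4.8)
The plan is to verify conditions $(B1)$ and $(B2)$ directly from the definitions of $C^{\mathcal{O}}_a$ and $C^{\mathcal{O}}_s$, exploiting the monotonicity of the accessible and source sets under free operations. First I would establish the key set-theoretic inclusion that drives everything: if $\sigma$ is reachable from $\rho$ via a free operation (i.e.~$\sigma \in M^{\mathcal{O}}_a(\rho)$), then every state reachable from $\sigma$ is also reachable from $\rho$, so that $M^{\mathcal{O}}_a(\sigma) \subseteq M^{\mathcal{O}}_a(\rho)$; dually, every state that can reach $\rho$ can also reach $\sigma$ only after first reaching $\rho$, which gives $M^{\mathcal{O}}_s(\rho) \subseteq M^{\mathcal{O}}_s(\sigma)$. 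This is just transitivity of the relation ``reachable by a free operation,'' which holds because the composition of two free operations in $\mathcal{O}$ is again a free operation in $\mathcal{O}$. Applying the (monotone, nonnegative) measure $\mu$ then yields $V^{\mathcal{O}}_a(\sigma) \leq V^{\mathcal{O}}_a(\rho)$ and $V^{\mathcal{O}}_s(\rho) \leq V^{\mathcal{O}}_s(\sigma)$.

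Next I would dispatch $(B1)$. Nonnegativity of both monotones is immediate since $\mu$ is nonnegative and the normalizing constants $V_a^{\sup,\mathcal{O}}, V_s^{\sup,\mathcal{O}}$ are positive, giving $C^{\mathcal{O}}_a(\rho) \geq 0$, while $C^{\mathcal{O}}_s(\rho) = 1 - V^{\mathcal{O}}_s(\rho)/V_s^{\sup,\mathcal{O}} \geq 0$ because the source volume never exceeds its supremum. For the vanishing-on-incoherent-states requirement, I would invoke the stipulated normalization of $\mu$: the defining property is precisely that $\mu(M^{\mathcal{O}}_a(\rho)) = 0$ and $\mu(M^{\mathcal{O}}_s(\rho))$ attains the maximal source volume whenever $\rho$ is incoherent. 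Hence $C^{\mathcal{O}}_a(\rho) = 0$ and $C^{\mathcal{O}}_s(\rho) = 1 - V_s^{\sup,\mathcal{O}}/V_s^{\sup,\mathcal{O}} = 0$ on incoherent states, as $(B1)$ demands.

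For $(B2)$, let $\Phi$ be a free operation and set $\sigma = \Phi(\rho)$; then $\sigma \in M^{\mathcal{O}}_a(\rho)$ by definition. The inclusions from the first paragraph give $V^{\mathcal{O}}_a(\Phi(\rho)) \leq V^{\mathcal{O}}_a(\rho)$ and $V^{\mathcal{O}}_s(\Phi(\rho)) \geq V^{\mathcal{O}}_s(\rho)$. Dividing by the respective suprema yields $C^{\mathcal{O}}_a(\rho) \geq C^{\mathcal{O}}_a(\Phi(\rho))$ directly, and for the source coherence the reversed inequality on volumes flips under the $1 - (\cdot)$ form to give $C^{\mathcal{O}}_s(\rho) \geq C^{\mathcal{O}}_s(\Phi(\rho))$. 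This establishes $(B2)$ for both quantities, and together with $(B1)$ completes the proof that each is a coherence monotone.

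The main obstacle I anticipate is not in the inequality-chasing, which is essentially formal, but in justifying that the maps $\rho \mapsto V^{\mathcal{O}}_a(\rho)$ and $\rho \mapsto V^{\mathcal{O}}_s(\rho)$ are well-defined and measurable, so that applying $\mu$ is legitimate and the inclusions genuinely transfer to volume inequalities. In particular one must be careful that $M^{\mathcal{O}}_a(\rho)$ and $M^{\mathcal{O}}_s(\rho)$ are $\mu$-measurable subsets of the state space and that the normalization constants $V_a^{\sup,\mathcal{O}}$ and $V_s^{\sup,\mathcal{O}}$ are finite and strictly positive (the latter so the ratios are defined); I would either take these as part of the standing assumptions on $\mu$ or argue measurability from the fact that $\mathcal{O}$ is a fixed class whose action on the compact set of density matrices has image of a controlled topological type. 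Everything else reduces to transitivity of reachability and monotonicity of $\mu$.
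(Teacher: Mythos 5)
Your proof is correct and follows essentially the same route as the paper's: condition $(B1)$ comes from the stipulated normalization of $\mu$ on incoherent states together with nonnegativity, and condition $(B2)$ comes from the inclusions $M^{\mathcal{O}}_a(\Phi(\rho))\subseteq M^{\mathcal{O}}_a(\rho)$ and $M^{\mathcal{O}}_s(\rho)\subseteq M^{\mathcal{O}}_s(\Phi(\rho))$ obtained from transitivity of reachability under composition of free operations. If anything, you state the source-set inclusion in the correct direction (the paper's proof writes it the other way around while still drawing the intended conclusion), and your added caveats about measurability and positivity of the normalizing constants only make explicit assumptions the paper leaves implicit.
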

The proof can be found in Supplemental Material \ref{proof_coherencemonotones}.
  
\begin{figure}
\centering
\includegraphics[width=0.5\textwidth]{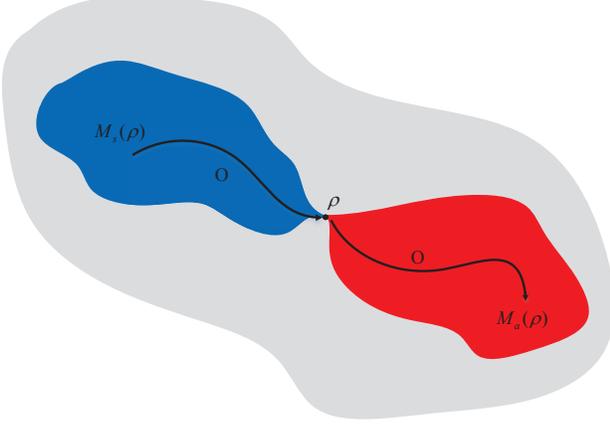}
\caption{(color online). In this schematic figure the source set, $M^\mathcal{O}_s(\rho)$, and the accessible set, $M^\mathcal{O}_a(\rho)$, of the state $\rho$ are depicted. Any state in $M^\mathcal{O}_s(\rho)$ can be transformed to $\rho$ via $\mathcal{O}$ and $\rho$ can be transformed into any state in $M^\mathcal{O}_a(\rho)$ via $\mathcal{O}$.}
\label{figVaVs1}
\end{figure}

\begin{remark} Just as the case in entanglement theory~\cite{Schwaiger15,Sauerwein15}, we would also be interested in the transformations between specific class of coherent states (e.g., the pure states transformation), and the specific volumes $V^\mathcal{O}_a(\rho)$ and $V^\mathcal{O}_s(\rho)$ are only supported on these classes. For example, for single-qubit states, any state can be represented as a point on (or in) the Bloch sphere. We can choose the superficial area on the sphere as the specific volumes $V^\mathcal{O}_a(\rho)$ and $V^\mathcal{O}_s(\rho)$ for this pure states transformation. Meanwhile, If a pure state $\ket{\psi}$ can be transformed to $\rho=\sum_ip_i\ketbra{\psi_i}{\psi_i}$ via free operations $\mathcal{O}$, there will be a channel $\Phi^\mathcal{O}=\sum_ip_i\Phi^\mathcal{O}_i$ corresponding to $\mathcal{O}$, where $\Phi^\mathcal{O}_i(\ket{\psi})=\ket{\psi_i}$ for any $i$. It means that the more pure states we obtained, the more generic states will be obtained. Thus, in this sense, the proficiency of a pure state to generate other pure states characterizes its accessible coherence.
\end{remark}
 
In the following, we will derive explicit formulas for source coherence of pure states transforms via $\mathcal{O}\in\{LSICC,LICC\}$. We consider representatives of LIU classes. To obtain the source coherence, we have:

\begin{theorem}\label{SC_LICC} The source coherence of a bipartite state $\ket{\phi}=\sum_{i=1}^d\sqrt{\lambda_i}\ket{ii}$ with sorted Schmidt vector $\lambda(\phi_{AB})=(\lambda_1,\lambda_2,...,\lambda_d)$ is given by
\begin{eqnarray}\label{eq:Cs_high2}
C_s^{\mathcal{O}}(\ket{\psi})=1-\sum_{\pi\in\sum_d}\frac{[\sum_{k=1}^{d}\pi(k)\lambda_k-\frac{d+1}{2}]^{d-1}}{\Pi_{k=1}^{d-1}\pi(k)-\pi(k+1)},
\end{eqnarray}
where $\mathcal{O}\in\{LSICC,LICC\}$.
\end{theorem}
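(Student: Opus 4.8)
The plan is to compute the source volume $V_s^{\mathcal{O}}(\ket{\psi})$ explicitly for a bipartite pure state with a given sorted Schmidt vector, and then plug into the definition $C_s^{\mathcal{O}}(\ket{\psi}) = 1 - V_s^{\mathcal{O}}(\ket{\psi})/V_s^{\sup,\mathcal{O}}$. By Theorem~\ref{The_LSICC}, two pure states are SLICC equivalent iff they are related by local invertible SIO operators, and by Theorem~\ref{The_LICCeq} the deterministic (LICC) transformations reduce to LIU equivalence; the key simplification I expect to exploit is that for bipartite pure states the relevant invariant is the Schmidt vector, so the source set $M_s^{\mathcal{O}}(\ket{\psi})$ of states that can reach $\ket{\psi}$ is characterized by a majorization-type condition on Schmidt coefficients (Nielsen's theorem and its incoherent analogue). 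Concretely, I would first argue that $\ket{\phi}$ can be transformed to $\ket{\psi}$ via $\mathcal{O}$ precisely when the Schmidt vector of $\ket{\phi}$ majorizes that of $\ket{\psi}$ (or the appropriate incoherent refinement thereof), reducing the volume computation to a volume on the probability simplex $\Delta_{d-1}=\{\lambda : \lambda_k \ge 0, \sum_k \lambda_k = 1\}$.

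Next I would set up the measure. Since the source set lives in the space of sorted Schmidt vectors, I take $\mu$ to be the induced Lebesgue measure on $\Delta_{d-1}$, and the maximal source volume $V_s^{\sup,\mathcal{O}}$ is the total volume of the simplex (attained at the incoherent/maximally-mixed-Schmidt extreme, consistent with the normalization convention stated before the theorem). The core calculation is then the volume of the majorization cone: the set of $\lambda$ whose ordered vector majorizes a fixed sorted vector $(\lambda_1,\dots,\lambda_d)$. I would compute this by summing over the symmetric group $\Sigma_d$ (written $\sum_d$ in the statement): for each permutation $\pi$, the region where $\pi$ gives the sorting order contributes a simplex-like piece, and the inclusion-exclusion / direct integration over these chambers produces the stated sum. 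Each summand involves a $(d-1)$-fold integral yielding the factor $[\sum_{k=1}^d \pi(k)\lambda_k - \tfrac{d+1}{2}]^{d-1}$ in the numerator, while the product of consecutive differences $\prod_{k=1}^{d-1}(\pi(k)-\pi(k+1))$ in the denominator arises as the Jacobian / normalizing constant from the linear change of variables that straightens the majorization inequalities into coordinate half-spaces.

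The main obstacle will be carrying out the chamber decomposition of the majorization polytope correctly and identifying the precise affine change of variables on each chamber so that the Jacobian reproduces the denominator $\prod_{k=1}^{d-1}(\pi(k)-\pi(k+1))$, together with getting the constant $\tfrac{d+1}{2}$ and the exponent $d-1$ right from the boundary of the ordered simplex. I would verify the formula in the $d=2$ case by hand, where the sum over $\Sigma_2$ collapses to a single nontrivial term and the source coherence should reduce to an explicit function of $\lambda_1-\lambda_2$, and cross-check against the normalization that $C_s^{\mathcal{O}}$ vanishes for incoherent states and is maximized for the maximally coherent state. A secondary technical point I would handle carefully is the passage between the SLICC (stochastic, invertible-SIO) and LICC (LIU-equivalent) regimes: I expect both to yield the same source-volume formula because the stated result holds uniformly for $\mathcal{O}\in\{LSICC,LICC\}$, so I would confirm that the majorization characterization governing $M_s^{\mathcal{O}}$ coincides in the two cases up to a measure-zero boundary.
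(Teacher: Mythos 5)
Your overall route coincides with the paper's: characterize the source set $M_s^{\mathcal{O}}(\ket{\psi})$ by a majorization condition on Schmidt vectors, identify it with a polytope of sorted vectors in the probability simplex, compute its Lebesgue volume, and normalize by the maximal source volume attained at the incoherent vector $(1,0,\dots,0)$. The only methodological difference is that you propose to \emph{derive} the polytope volume by a chamber decomposition over $\Sigma_d$ with explicit Jacobians, whereas the paper simply imports the volume formula for this simple polytope from Sauerwein \emph{et al.}; your version would be more self-contained, but all of the actual work (the decomposition, the affine changes of variables producing the denominators $\prod_{k=1}^{d-1}(\pi(k)-\pi(k+1))$, and the constant $\tfrac{d+1}{2}$) is deferred rather than carried out. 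Note also that the operative ingredient is the deterministic-transformation lemma (Lemma~\ref{Chitambar16}, the LICC analogue of Nielsen's theorem), not Theorems~\ref{The_LICCeq} and~\ref{The_LSICC} on equivalence classes, which only justify taking $\mu$ on LIU classes.

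There is, however, a concrete error in the one step you do commit to: you assert that $\ket{\phi}$ can be transformed to $\ket{\psi}$ precisely when the Schmidt vector of $\ket{\phi}$ \emph{majorizes} that of $\ket{\psi}$, and accordingly you set out to compute the volume of the set of vectors that majorize $\lambda(\psi)$. The direction is reversed. By Nielsen's theorem and Lemma~\ref{Chitambar16}, $\ket{\phi}\rightarrow\ket{\psi}$ iff $\lambda(\phi)\prec\lambda(\psi)$, i.e., the \emph{target} majorizes the \emph{source}; hence the source set is $\{\lambda^\downarrow : \lambda^\downarrow \prec \lambda(\psi)\}$, the sorted chamber of the permutohedron of $\lambda(\psi)$, while the set you describe, $\{\lambda^\downarrow : \lambda^\downarrow \succ \lambda(\psi)\}$, is the \emph{accessible} set. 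With your orientation the incoherent state $(1,0,\dots,0)$ would have source volume zero (nothing majorizes it except itself), giving $C_s=1$ instead of $0$ and contradicting the normalization you yourself impose; your planned $d=2$ sanity check would expose this, but as written the computation targets the wrong polytope and the step fails.
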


The proof can be found in Supplemental Material~\ref{sec:LOCC trans}.

Note that, for a state $\ket{\psi}=\sum_i\sqrt{\lambda_i}\ket{i}$ and its ``maximally correlated" state $\ket{\psi'}=\sum_i\sqrt{\lambda_i}\ket{ii}$, $\lambda^\downarrow (\Delta(\psi))=\lambda^\downarrow (\psi')$. This implies that both accessible coherence and source coherence are the same for those two states.

\begin{example}[LICC and LSICC transformations of qitrit-qutrit pure states]
\begin{figure}
\centering
\includegraphics[width=0.5\textwidth]{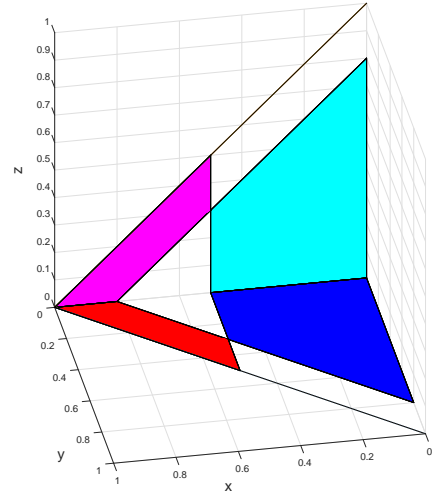}
\caption{(color online). The source set (blue), $M_s(\ket{\psi})$, and the accessible set (red), $M_a(\ket{\psi})$, of the state $\ket{\psi}$ are depicted. Meanwhile, the source set (cyan), $M_s(\ket{\phi})$, and the accessible set (magenta), $M_a(\ket{\phi})$, of the state $\ket{\phi}$ are depicted. Here, the $x$-axis represents the basis vectors $\ket{00}$, the $y$-axis represents the basis vectors $\ket{01}$ and the $z$-axis represents the basis vectors $\ket{11}$. In the figure, $\ket{\psi}=\sqrt{0.5}\ket{00}+\sqrt{0.3}\ket{11}+\sqrt{0.2}\ket{22}$ and $\ket{\phi}=\sqrt{0.5}\ket{00}+\sqrt{0.3}\ket{01}+\sqrt{0.2}\ket{02}$. The source sets (accessible sets) are indeed difference between these two states.}
\label{qutritqutrit}
\end{figure}

Consider the following two pure states $\ket{\psi}=\sqrt a\ket{00}+\sqrt b\ket{11}+\sqrt c\ket{22}$ and $\ket{\phi}=\sqrt a\ket{00}+\sqrt b\ket{01}+\sqrt c\ket{02}$ with $a\geq b\geq c\geq0$ and $c=1-a-b$. Observe that $\ket{\psi}$ can be represented as a point in the $x$-$y$ plane, where the $x$-axis represents the basis $\ket{00}$ and the $y$-axis represents the basis vector $\ket{01}$. Similarly, $\ket{\phi}$ can be represented as a point in the $x$-$z$ plane, where, in addition, the $z$-axis represents the basis vector $\ket{11}$. The accessible volume and the source volume are

\begin{align}
V_{a}^{{\mathcal{O}}}(\ket{\psi})=V_{a}^{{\mathcal{O}}}(\ket{\phi}) = \frac{1}{2}[(1-a)^2-b^2],
\end{align}

\begin{align}
V_{s}^{{\mathcal{O}}}(\ket{\psi})=V_{s}^{{\mathcal{O}}}(\ket{\phi}) = \frac{1}{2}[(a+b)^2-b^2],
\end{align}
where $\mathcal{O}\in\{LSICC,LICC\}$. Thus, the accessible coherence and the source coherence are
\begin{align}
C_{a}^{{\mathcal{O}}}(\ket{\psi})=C_{a}^{{\mathcal{O}}}(\ket{\phi}) =(1-a)^2-b^2,
\end{align}

\begin{align}
C_{s}^{{\mathcal{O}}}(\ket{\psi})=C_{s}^{{\mathcal{O}}}(\ket{\psi})=1-(a+b)^2+b^2.
\end{align}

Figure~\ref{qutritqutrit} shows the accessible set and the source set are difference between $\ket{\psi}$ and $\ket{\phi}$.
\end{example}

\section{Conclusion}\label{sec:conclusion}
In this paper, we studied multipartite coherence structures, and characterized the LICC and SLICC inequivalent classes. In contrast to SLOCC inequivalent classes in entanglement theory, the number of SLICC inequivalent classes of two-qubit pure states in coherence theory is already infinite. We also introduced two coherence monotones: accessible coherence and source coherence, attempting to quantifying the operational quality of multipartite coherence. These two new coherence monotones can be applied in many scenarios (such as PIO, IC, LICC, LSICC). We analyzed pure (or mixed) states via $IC$ and derived explicit formulas for the source coherence. We also showed how the accessible coherence can be computed numerically and gave examples. We hope these operational  monotones will assist with understanding general quantum resource theories.

\begin{acknowledgments}
We thank Eric Chitambar, Zhengjun Xi, Akram Youssry and Kun Fang for valuable discussions. M.-H.~Hsieh was supported by an ARC Future Fellowship under Grant FT140100574 and by US Army Research Office for Basic Scientific Research Grant W911NF-17-1-0401. Y.~Li was supported by the National Natural Science Foundation of China under Grant No.~11671244. Y.~Luo was supported by Research Funds for the Central Universities under Grants No.~2016TS060 and 2016CBY003 and the National Natural Science Foundation of China under Grant No.~61671280.
\end{acknowledgments}

\sloppy
\bibliographystyle{apsrev4-1}
\bibliography{main}
\fussy

%%%%%%%%%%%%%%%%%%%%%%%%%%%%%%%%%%%%%%%%%%%%%%%%%%%%%%%%%%%%%%%%%%%%%%%%%%%%%%%%%%%%%%%%%%%%%%%%%%%%%%%%%%%%%%%%%%%%%%%%%%%%%%%%%%%%%%%%%

\clearpage

\onecolumngrid
\begin{center}
\vspace*{\baselineskip}
{\textbf{\large Supplemental Material: \\[3pt] Inequivalent Multipartite Coherence Classes and New Coherence Monotones}}\\[1pt] \quad \\
\end{center}
%\twocolumngrid

\renewcommand{\theequation}{S\arabic{equation}}
\setcounter{equation}{0}
\setcounter{figure}{0}
\setcounter{table}{0}
\setcounter{section}{0}
\setcounter{page}{1}
\makeatletter

%%%%%%%%%%%%%%%%%%%%%%%%%%%%%%%%%%%%%%%%%%%%
%%%%%%%%%%%%%%%%%%%%%%%%%%%%%%%%%%%%%%%%%%%%
%%%%%%%%%%%%%%%%%%%%%%%%%%%%%%%%%%%%%%%%%%%%

\section{Notation and Preliminaries} \label{sec:Preliminary}
We first introduce the necessary notation.  We consider a Hilbert space $\mathcal{H}$ of finite dimension $d$. The incoherent basis of $\mathcal{H}$ is fixed and is denoted as $\{\ket{i}\}_{i=1}^d$ throughout this paper. A unitary operation $U$ is called an incoherent unitary (\textbf{IU}) if $U=\sum_{i=1}^de^{i\theta_{i}}\ket{i}\bra{\pi(i)}$ with $\pi(i)$ being a permutation. Given a quantum state $\rho$, its von Neumann entropy is $S(\rho)=-\Tr{\rho\log\rho}$.  For an $N$-partite state $\ket{\psi}$ defined on $\mathcal{H}_1\otimes\mathcal{H}_2\otimes\cdots\otimes\mathcal{H}_N$, its reduced density operator on a subset $X\subset [N]:=\{1,2,\cdots,N\}$ is denoted as $\rho^{\psi}_X = \text{Tr}_{\overline{X}} \ket{\psi}\bra{\psi} $, where $\overline{X}=[N]\backslash X$.  

A general resource theory for a quantum system has two components: free states and free operations. In the resource theory of coherence, a free state $\sigma$ (incoherent state)  can be written as $\sigma=\sum_i\sigma_i\ketbra{i}{i}$ for a fixed basis $\{\ket{i}\}$. Variants of the free operations in the resource theory of coherence have been proposed. A completely positive and trace-preserving (CPTP) map $\Phi$ is said to be incoherent operations (\textbf{IC}), if its Kraus operators $K_n$ are of the form $K_n=\sum_i c(i)\ket{j(i)}\bra{i}$ with $\ket{j(i)}$ being a (possibly many-to-one) function from the index set of the basis onto itself, and coefficients $c(i)$ satisfying $\sum_{n}K_n^{\dag}K_n=I$~\cite{Baumgratz14}. If every $K_n=\sum_i c(i)\ket{i}\bra{\pi(i)}$, where $\pi(i)$ is a permutation, then the corresponding operation is a strictly incoherent operation (\textbf{SIO})~\cite{Winter16}.  Lastly, the free operation is called a physically incoherent operation (\textbf{PIO}), if $K_n$ has the form $K_n=U_nP_n$, where $\{U_n\}$ are IU operators and $\{P_n\}$ form an orthogonal and complete set of incoherent projectors~\cite{Chitambar16_PIO}.  From their definitions, we have following inclusion: $PIO\subset SIO\subset IC$.

A fundamental class of operations in entanglement theory is Local Quantum Operations and Classical Communication (\textbf{LOCC}), since it allows an operational definition of ``entanglement'' \cite{Nielsen00}. For two bipartite entangled states $\ket{\phi}$ and $\ket{\psi}$, what is the necessary and sufficient condition for transforming $\ket{\phi}$ to $\ket{\psi}$ using LOCC operations?  Nielsen showed that this condition of entanglement transformation is related to the algebraic theory of $majorization$:

\begin{lemma}[\cite{Nielsen98}]Given two bipartite pure states $\ket{\phi}$ and $\ket{\psi}$ on the system $\mathcal{H}_A\otimes\mathcal{H}_B$, $\ket{\phi}$ can be converted into $\ket{\psi}$ via LOCC if and only if $\lambda(\phi)\prec\lambda(\psi)$, where $\lambda(\phi)$ denotes the vector of eigenvalues of $\Tr_B(\ketbra{\phi}{\phi})$. Here, for two $d$-dimensional vectors $x=(x_1,x_2,...,x_d)$ and $y=(y_1,y_2,...,y_d)$, $x\prec y$ holds if and only if for each $k$ in the range $1,2,...,d$, $\sum_{i=1}^kx^\downarrow_i\leq\sum_{i=1}^ky^\downarrow_i$ with equality when $k=d$, where $x^\downarrow_i$ means that the elements $x_i$ are arranged in decreasing order. 
\end{lemma}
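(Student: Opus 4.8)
The plan is to prove both implications by reducing an arbitrary LOCC protocol to a canonical one-way form and then translating the resulting operator relations into the language of doubly stochastic matrices. I would rely on two classical inputs that may be quoted: the Lo--Popescu reduction, which asserts that any LOCC transformation between two pure bipartite states can be implemented by a protocol in which Alice performs a single generalized measurement $\{A_k\}$ with $\sum_k A_k^\dagger A_k=\id$ and Bob applies a unitary $U_k$ conditioned on the communicated outcome $k$; and the Hardy--Littlewood--P\'olya theorem together with Birkhoff's theorem, which state that $x\prec y$ holds iff $x=Dy$ for some doubly stochastic $D$, and that every doubly stochastic $D$ is a convex combination $\sum_k p_k\Pi_k$ of permutation matrices.

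For necessity (a successful LOCC transformation forces $\lambda(\phi)\prec\lambda(\psi)$) I would start from the Lo--Popescu form. Determinism means that for every outcome $k$ with $p_k>0$ one has $(A_k\otimes U_k)\ket{\phi}=\sqrt{p_k}\,\ket{\psi}$. Tracing out Bob gives the operator identity $A_k\,\rho_A^\phi\,A_k^\dagger=p_k\,\rho_A^\psi$ on Alice's side, where $\rho_A^\phi=\Tr_B\ketbra{\phi}{\phi}$ and similarly for $\psi$. Writing $B_k=A_k(\rho_A^\phi)^{1/2}$, these become $B_kB_k^\dagger=p_k\rho_A^\psi$ and $\sum_k B_k^\dagger B_k=\rho_A^\phi$. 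Taking the singular value decomposition of each $B_k$, whose left singular vectors are the eigenvectors $\ket{e_j}$ of $\rho_A^\psi$, and then evaluating $\sum_k B_k^\dagger B_k$ in the eigenbasis $\{\ket{f_i}\}$ of $\rho_A^\phi$, I obtain $\lambda_i(\phi)=\sum_j D_{ij}\lambda_j(\psi)$ with $D_{ij}=\sum_k p_k|\langle f_i|v_j^k\rangle|^2$. The row and column sums of $D$ each equal $\sum_k p_k=1$, because the relevant vectors form orthonormal bases, so $D$ is doubly stochastic and Hardy--Littlewood--P\'olya yields $\lambda(\phi)\prec\lambda(\psi)$.

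For sufficiency I would run this computation in reverse to build an explicit one-way protocol. Given $\lambda(\phi)\prec\lambda(\psi)$, the Hardy--Littlewood--P\'olya theorem provides a doubly stochastic $D$ with $\lambda(\phi)=D\lambda(\psi)$, and Birkhoff writes $D=\sum_k p_k\Pi_k$ as a convex combination of permutation matrices $\Pi_k$, with $\pi_k$ the corresponding permutations. Fixing the Schmidt decompositions $\ket{\phi}=\sum_i\sqrt{\lambda_i(\phi)}\ket{ii}$ and $\ket{\psi}=\sum_j\sqrt{\lambda_j(\psi)}\ket{jj}$, I define Alice's measurement operators $A_k=\sqrt{p_k}\sum_i\sqrt{\lambda_{\pi_k(i)}(\psi)/\lambda_i(\phi)}\,\ketbra{\pi_k(i)}{i}$, restricted to the support of $\rho_A^\phi$. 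A direct check gives $\sum_k A_k^\dagger A_k=\id$ from the relation $\lambda_i(\phi)=\sum_k p_k\lambda_{\pi_k(i)}(\psi)$, and $(A_k\otimes U_k)\ket{\phi}=\sqrt{p_k}\ket{\psi}$ once Bob applies the permutation unitary $U_k$ correcting his index; this is a valid deterministic LOCC protocol realizing $\ket{\phi}\to\ket{\psi}$.

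The hard part will be the Lo--Popescu reduction itself: a general LOCC protocol involves arbitrarily many rounds of two-way communication and adaptively chosen local measurements, and collapsing all of this into a single Alice-measurement-plus-Bob-unitary form is the substantive structural input. A secondary technical point I would need to handle carefully is the case where $\rho_A^\phi$ or $\rho_A^\psi$ is rank-deficient, so that the inverses and ratios above must be read on the appropriate supports and the Schmidt ranks of $\ket{\phi}$ and $\ket{\psi}$ compared, consistent with the fact that an LOCC transformation can only decrease the Schmidt rank.
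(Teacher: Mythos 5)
This lemma is imported by the paper as a known background result from Nielsen's work and is stated without proof, so there is no in-paper argument to compare against. Your proposal is a correct reconstruction of the standard proof of that cited result: the Lo--Popescu one-way reduction plus the operator identity $A_k\rho_A^\phi A_k^\dagger=p_k\rho_A^\psi$ yields a doubly stochastic matrix for necessity, and the Birkhoff decomposition yields the explicit measurement-plus-correcting-unitary protocol for sufficiency; you also correctly identify the two points that need care (the Lo--Popescu reduction itself, which you quote rather than prove, and padding the singular-vector sets to full orthonormal bases when the reduced states are rank-deficient so that $D$ is genuinely doubly stochastic).
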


Coherence state transformation has also been studied in the literature, motivated by entanglement state transformation. In the single party setting, Du $et.$ $al.$ obtained the following necessary and sufficient condition of pure state coherence  transformation via $\mathcal{O}\in\{SIO,IC\}$:

\begin{lemma}[\cite{Du15}] \label{Du15}
For a fixed basis $\{\ket{i}\}$, a pure state $\ket{\phi}$ can be converted into $\ket{\psi}$ via $\mathcal{O}\in\{SIO,IC\}$ if and only if $\lambda(\Delta(\phi))\prec\lambda(\Delta(\psi))$, where $\Delta(\rho)=\sum_i\bra{i}\rho\ket{i}\ketbra{i}{i}$, and for convenience, we denote $\Delta(\ketbra{\psi}{\psi})$ as $\Delta(\psi)$. 
\end{lemma}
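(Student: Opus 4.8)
The plan is to prove both directions at once for the two classes by exploiting the inclusion $SIO\subset IC$. Writing $\ket\phi=\sum_i\phi_i\ket i$ and $\ket\psi=\sum_i\psi_i\ket i$, observe that for a pure state $\Delta(\phi)=\sum_i|\phi_i|^2\ketbra i i$, so $\lambda(\Delta(\phi))=p:=(|\phi_i|^2)_i$ and $\lambda(\Delta(\psi))=q:=(|\psi_i|^2)_i$ are the population vectors, and the claimed criterion is exactly $p\prec q$. It therefore suffices to establish (i) sufficiency in the strongest form, $p\prec q\Rightarrow\ket\phi\to\ket\psi$ by an $SIO$ map (which is automatically $IC$), and (ii) necessity in the strongest form, $\ket\phi\to\ket\psi$ by an $IC$ map $\Rightarrow p\prec q$ (which then covers $SIO$ as well).

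For sufficiency I would argue constructively. By Hardy--Littlewood--P\'olya, $p\prec q$ yields a doubly stochastic $D$ with $p=Dq$, and by Birkhoff's theorem $D=\sum_n t_n P_{\sigma_n}$ with $t_n\ge0$, $\sum_n t_n=1$ and permutations $\sigma_n$, i.e. $p_i=\sum_n t_n q_{\sigma_n(i)}$. On the support of $\ket\phi$ I then define Kraus operators $K_n=\sum_i c_n(i)\ket{\sigma_n(i)}\bra i$ with $c_n(i)=\sqrt{t_n}\,\psi_{\sigma_n(i)}/\phi_i$. Each $K_n$ has $SIO$ form, and a direct check gives $K_n\ket\phi=\sqrt{t_n}\sum_k\psi_k\ket k=\sqrt{t_n}\ket\psi$, so every branch lands on $\ket\psi$ and the transformation is deterministic; moreover $\sum_n K_n^\dagger K_n=\sum_i\big(\sum_n t_n q_{\sigma_n(i)}/p_i\big)\ketbra i i=\sum_i\ketbra i i=\mathbbm{1}$, so $\{K_n\}$ is a genuine $SIO$ instrument. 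Indices with $\phi_i=0$ have $p_i=0$ and are handled by restricting to the support, with $p\prec q$ guaranteeing consistency. As a cross-check the same conclusion follows from Nielsen's theorem applied to the maximally correlated states $\sum_i\phi_i\ket{ii}$ and $\sum_i\psi_i\ket{ii}$, whose Schmidt vectors are $p$ and $q$.

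For necessity, determinism plus purity of the output forces a rank-one structure on each branch: since $\sum_n K_n\ketbra\phi\phi K_n^\dagger=\ketbra\psi\psi$ is rank one and each summand is positive semidefinite, every Kraus operator obeys $K_n\ket\phi=\mu_n\ket\psi$ with $\sum_n|\mu_n|^2=1$. When the operation is $SIO$, $K_n=\sum_i c_n(i)\ket{\pi_n(i)}\bra i$ with $\pi_n$ a permutation, and comparing coefficients gives $|c_n(i)|^2 p_i=|\mu_n|^2 q_{\pi_n(i)}$. Summing over $n$ and using the diagonal of $\sum_n K_n^\dagger K_n=\mathbbm{1}$, namely $\sum_n|c_n(i)|^2=1$, I obtain $p_i=\sum_k D_{ik}q_k$ with $D_{ik}=\sum_{n:\pi_n(i)=k}|\mu_n|^2$. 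Because each $\pi_n$ is a bijection, $D$ has unit row and column sums, hence is doubly stochastic, and therefore $p\prec q$.

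The hard part is the necessity direction for genuine $IC$ operations, where the index map $j_n$ in $K_n=\sum_i c_n(i)\ket{j_n(i)}\bra i$ may be many-to-one. Then coefficient comparison only yields $\mu_n\psi_k=\sum_{i\in j_n^{-1}(k)}c_n(i)\phi_i$, and interference among merged amplitudes destroys the clean population identity: using $\ket\phi=\sum_n K_n^\dagger K_n\ket\phi=(\sum_n\mu_n K_n^\dagger)\ket\psi$ together with Cauchy--Schwarz and $\sum_n|c_n(i)|^2=1$ gives only the one-sided bound $p_i\le\sum_k D_{ik}q_k$ with $D_{ik}=\sum_{n:j_n(i)=k}|\mu_n|^2$ merely row stochastic, which recovers majorization precisely when $j_n$ is injective. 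To close the gap I would exploit the so-far-unused off-diagonal trace-preservation constraints $\sum_n\overline{c_n(i)}c_n(i')\,\delta_{j_n(i),j_n(i')}=0$ for $i\neq i'$, which couple exactly those Kraus operators that merge $i$ and $i'$, to show that coherent merging cannot inflate the partial sums $\sum_{\text{top }m}p$ beyond $\sum_{\text{top }m}q$; equivalently, one shows any deterministic pure-to-pure $IC$ protocol can be replaced by an $SIO$ one, reducing $IC$ necessity to the $SIO$ case above. A more robust alternative is to prove that $\rho\mapsto\sum_i f\big(\lambda_i(\Delta(\rho))\big)$ is non-increasing under $IC$ for every Schur-concave $f$ (it suffices to treat $f_t(x)=-(x-t)_+$); applying this to the transformation gives $\sum_i f(p_i)\ge\sum_i f(q_i)$ for all Schur-concave $f$, which with $\sum_i p_i=\sum_i q_i=1$ is equivalent to $p\prec q$. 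Establishing this monotonicity, i.e. taming the interference intrinsic to many-to-one incoherent Kraus operators, is the crux of the argument.
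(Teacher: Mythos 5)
The paper offers no proof of this statement to compare against: it is imported verbatim from Du \emph{et al.}~\cite{Du15} as a known result. Judged on its own merits, your proposal is correct and complete for three of the four implications. The sufficiency construction is the standard one and checks out: Hardy--Littlewood--P\'olya gives $p=Dq$, Birkhoff gives $D=\sum_n t_n P_{\sigma_n}$, and the operators $K_n=\sum_i \sqrt{t_n}\,(\psi_{\sigma_n(i)}/\phi_i)\ket{\sigma_n(i)}\bra{i}$ are strictly incoherent, satisfy $K_n\ket{\phi}=\sqrt{t_n}\ket{\psi}$ on every branch, and sum to a trace-preserving instrument because $\sum_n t_n q_{\sigma_n(i)}=p_i$ (your padding of the indices with $\phi_i=0$ is legitimate, since $p_i=0$ forces $t_nq_{\sigma_n(i)}=0$ for all $n$, leaving the columns free to be completed). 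The SIO necessity is likewise sound: rank-one output forces $K_n\ket{\phi}=\mu_n\ket{\psi}$, coefficient comparison gives $|c_n(i)|^2p_i=|\mu_n|^2q_{\pi_n(i)}$, and $D_{ik}=\sum_{n:\pi_n(i)=k}|\mu_n|^2$ is doubly stochastic precisely because each $\pi_n$ is a bijection and the diagonal of $\sum_nK_n^\dagger K_n=\id$ gives $\sum_n|c_n(i)|^2=1$.

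The genuine gap is the one you flag yourself: necessity for IC, where the index maps $j_n$ may be many-to-one. There your argument delivers only a row-stochastic $D$ and a one-sided bound, and neither proposed repair is actually carried out --- the off-diagonal trace-preservation constraints $\sum_n\overline{c_n(i)}c_n(i')\delta_{j_n(i),j_n(i')}=0$ are invoked but never used to control the partial sums $\sum_{\text{top }m}p_i$, and the monotonicity of $\sum_if(\lambda_i(\Delta(\rho)))$ under IC for all Schur-concave $f$ (equivalently, of the Vidal-type family $C_k(\psi)=\sum_{i\geq k}\lambda_i^{\downarrow}(\Delta(\psi))$ under incoherent Kraus branching) is stated as a goal, not established; condition $(B2)$-style monotonicity of a single measure does not suffice. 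You should also be aware that this is not a pedantic loose end: precisely this step was flawed in the originally published proof of the lemma you are reproving, and closing it required genuinely new input --- for instance, showing that a deterministic pure-to-pure IC transformation induces an LOCC transformation between the associated maximally correlated states $\sum_i\phi_i\ket{ii}\to\sum_i\psi_i\ket{ii}$, after which Nielsen's theorem yields $p\prec q$; note your Nielsen cross-check runs this reduction only in the easy (sufficiency) direction. So as written, your proposal proves the lemma for SIO in both directions and proves the ``if'' direction for IC, but the ``only if'' for IC --- half of the quoted statement --- remains open.
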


In the multipartite setting, the class of Local Incoherent Operations and Classical Communication (\textbf{LICC}) can be defined accordingly when the local incoherent operations are IC operators~\cite{Chitambar16, Streltsov_prx}. If the local operations are SIO operations, we call such protocols  \textbf{LSICC}. It is easy to see the following inclusion: $LSICC\subset LICC.$

Chitambar and Hsieh \cite{Chitambar16} studied pure state transformation under LICC and found following:

\begin{lemma}[\cite{Chitambar16}] \label{Chitambar16} Suppose that two bipartite pure states $\ket{\psi}$ and $\ket{\phi}$ have reduced density matrices that are diagonal in the incoherent bases for both parties and both states. Then $\ket{\phi}$ can be converted into $\ket{\psi}$ via LICC if and only if the
squared Schmidt coefficients of $\ket{\psi}$ majorize those of $\ket{\phi}$, i.e., $\ket{\phi}\prec\ket{\psi}$.
\end{lemma}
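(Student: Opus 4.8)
The plan is to prove both implications after a structural reduction, leaning on the inclusions $\text{LSICC}\subseteq\text{LICC}\subseteq\text{LOCC}$ together with the two transformation criteria already quoted, Nielsen's theorem \cite{Nielsen98} and the single-party criterion of \cref{Du15}. First I would record the consequence of the hypothesis: since $\rho^\phi_A,\rho^\phi_B,\rho^\psi_A,\rho^\psi_B$ are all diagonal in the incoherent bases, the Schmidt bases of $\ket{\phi}$ and $\ket{\psi}$ may be chosen to consist of incoherent basis vectors (each eigenspace of a diagonal reduced state is spanned by such vectors, which also dispenses with the degeneracy ambiguity). Hence $\ket{\phi}=\sum_k\sqrt{p_k}\ket{i_k}_A\ket{j_k}_B$ and $\ket{\psi}=\sum_k\sqrt{q_k}\ket{i'_k}_A\ket{j'_k}_B$ are maximally correlated up to a relabelling, and applying local incoherent (permutation) unitaries---reversible free operations---brings both to the standard form $\ket{\phi}=\sum_k\sqrt{p_k}\ket{kk}$ and $\ket{\psi}=\sum_k\sqrt{q_k}\ket{kk}$. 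Here $p=\lambda(\phi)$ and $q=\lambda(\psi)$ are exactly the squared Schmidt coefficients, so the claim reduces to: $\ket{\phi}\to\ket{\psi}$ under LICC iff $p\prec q$.

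For necessity I would simply note that every LICC protocol is in particular an LOCC protocol, so a deterministic LICC conversion $\ket{\phi}\to\ket{\psi}$ is a deterministic LOCC conversion; Nielsen's theorem \cite{Nielsen98} then forces $\lambda(\phi)\prec\lambda(\psi)$, i.e. $p\prec q$. This direction needs nothing beyond the inclusion and the already-quoted lemma.

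For sufficiency, assume $p\prec q$ and lift a single-party incoherent transformation to the bipartite setting. Introduce the single-party states $\ket{\phi_s}=\sum_k\sqrt{p_k}\ket{k}$ and $\ket{\psi_s}=\sum_k\sqrt{q_k}\ket{k}$, for which $\Delta(\phi_s)$ and $\Delta(\psi_s)$ have spectra $p$ and $q$; since $p\prec q$, \cref{Du15} furnishes an SIO channel $\{K_n\}$ with $K_n\ket{i}=c_n(i)\ket{\pi_n(i)}$ (each $\pi_n$ a permutation) realizing the deterministic transformation, so that $K_n\ket{\phi_s}=\sqrt{r_n}\,e^{i\theta_n}\ket{\psi_s}$ for every outcome $n$, with $\sum_n r_n=1$. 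I would then define the LSICC protocol: Alice applies $\{K_n\}$ to her half of $\ket{\phi}=\sum_k\sqrt{p_k}\ket{kk}$, obtaining $(K_n\otimes I)\ket{\phi}=\sum_i\sqrt{p_i}\,c_n(i)\ket{\pi_n(i)}_A\ket{i}_B$; she sends the outcome $n$ to Bob over the classical channel; Bob applies the incoherent permutation unitary $V_n=\sum_i\ket{\pi_n(i)}\bra{i}$ on his side. A one-line relabelling $j=\pi_n(i)$, combined with the identity $K_n\ket{\phi_s}=\sqrt{r_n}\,e^{i\theta_n}\ket{\psi_s}$, shows the corrected state equals $\sqrt{r_n}\,e^{i\theta_n}\sum_j\sqrt{q_j}\ket{jj}=\sqrt{r_n}\,e^{i\theta_n}\ket{\psi}$ for every $n$. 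Thus every branch yields $\ket{\psi}$, the conversion is deterministic, and since it uses only local SIO, classical communication, and a local incoherent correction, it is an LSICC protocol, hence LICC.

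The main obstacle---and the reason the maximal-correlation reduction matters---is keeping every step incoherent in the sufficiency direction. The crucial point is that SIO Kraus operators are supported on (generalized) permutations, so Alice's measurement maps the maximally correlated input to another maximally correlated state whose $A$--$B$ correlation Bob can undo with a single incoherent permutation; a generic IC Kraus operator with a many-to-one $\ket{f_n(i)}\bra{i}$ would destroy the correlation, and no incoherent correction on Bob's side could restore the $\ket{jj}$ structure. I therefore need \cref{Du15} in its SIO form, not merely IC, and I must verify that the deterministic pure-to-pure transformation can be implemented so that each Kraus element sends $\ket{\phi_s}$ to a scalar multiple of $\ket{\psi_s}$---the standard feature of a deterministic pure-state output, which I would extract from the construction underlying \cref{Du15}. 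The remaining care is the degeneracy and relabelling bookkeeping in the initial reduction, which is routine.
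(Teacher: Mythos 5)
A framing point first: the paper itself contains no proof of this lemma --- it is imported verbatim from \cite{Chitambar16}, and the only internal commentary is the remark at the start of Supplemental Material~\ref{sec:LOCC trans} that the IC operators used in the cited proof are full rank, hence effectively SIO. Measured against that, the core of your argument is the standard route and is correct where it applies: necessity via the inclusion $\mathrm{LICC}\subseteq\mathrm{LOCC}$ plus Nielsen's theorem, and sufficiency by lifting the deterministic single-party SIO transformation of Lemma~\ref{Du15} on Alice's side (purity of the deterministic output indeed forces $K_n\ket{\phi_s}\propto\ket{\psi_s}$ branchwise, so nothing extra needs to be ``extracted'' from the construction) with Bob undoing each generalized permutation by the incoherent unitary $V_n$. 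Your observation that SIO Kraus structure, not generic many-to-one IC structure, is what preserves the $\ket{jj}$ correlation is exactly the point behind the paper's full-rank remark, and your construction even delivers the stronger LSICC sufficiency that the paper silently relies on when proving Theorem~\ref{SC_LICC}.

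There is, however, a genuine gap in your opening reduction: diagonality of all four marginals does \emph{not} ``dispense with the degeneracy ambiguity,'' and does not imply an incoherent Schmidt decomposition. Counterexample: $\ket{\phi}=\frac{1}{2}(\ket{00}+\ket{01}+\ket{10}-\ket{11})=\frac{1}{\sqrt{2}}(\ket{0}\ket{+}+\ket{1}\ket{-})$ has both marginals equal to $\id/2$, diagonal in the incoherent bases, yet admits no Schmidt decomposition into incoherent product vectors. The reason your eigenspace argument fails is that the Schmidt pairing ties the two sides together: within a degenerate block the state is maximally entangled between the corresponding eigenspaces, so choosing incoherent eigenvectors on $A$ fixes $B$-partners that may be coherent (here $\ket{\pm}$). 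Worse, this is not merely a gap in your write-up but a failure mode of the lemma's literal hypothesis: $\ket{\phi}$ above has four nonzero product terms in the fixed basis while the maximally correlated state $\ket{\psi}=\frac{1}{\sqrt{2}}(\ket{00}+\ket{11})$ with the identical Schmidt vector $(\frac{1}{2},\frac{1}{2})$ has two, so by Lemma~\ref{lemma_corank} the conversion $\ket{\psi}\to\ket{\phi}$ is impossible under LICC even though the squared Schmidt coefficients mutually majorize. The repair is to read the hypothesis as the original proof and all later uses in this paper do: both states are assumed maximally correlated in the incoherent basis, i.e.\ $\ket{\phi}=\sum_k\sqrt{p_k}\ket{i_k}\ket{j_k}$ with the $i_k$ distinct and the $j_k$ distinct. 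That form follows from diagonal marginals only when the nonzero Schmidt coefficients are nondegenerate (eigenvectors then unique up to phases, absorbable by diagonal incoherent unitaries); you should either add that nondegeneracy caveat or take the maximally correlated form as the standing assumption, after which the rest of your proof goes through as written.
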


Shi $et$ $al$.~\cite{Shi1705} and Streltsov $et$ $al$.~\cite{Streltsov1705} studied mixed state transformation of single-qubit systems via $\mathcal{O}\in\{SIO,IC\}$ and obtained the following result.

\begin{lemma}[\cite{Shi1705,Streltsov1705}] The state $\rho= \frac{1}{2}
\begin{pmatrix} 1+r_z & r_x+ir_y   \\ r_x-ir_y & 1-r_z 
\end{pmatrix}$ can be coverted into $\sigma= \frac{1}{2}
\begin{pmatrix} 1+s_z & s_x+is_y   \\ s_x-is_y & 1-s_z 
 \end{pmatrix}$ via SIO, IC if and only if the following inequalities are satisfied: 
 
\begin{eqnarray}
\begin{aligned}
&s_x^2+s_y^2 \leq r_x^2+r_y^2,\\
&\frac{1-r_z^2}{r_x^2+r_y^2}(s_x^2+s_y^2)+s_z^2 \leq1.
\end{aligned}
\end{eqnarray} 

\end{lemma}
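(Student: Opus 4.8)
The plan is to reduce the statement to the monotonicity of two real functionals and then to exhibit an explicit channel realizing every allowed transformation. First I would use incoherent unitaries---the phase gates $\mathrm{diag}(e^{i\theta_0},e^{i\theta_1})$ and the basis swap $\ket{0}\!\leftrightarrow\!\ket{1}$, all of which are reversible SIO---to assume without loss of generality that the coherences are real and non-negative, writing $R^2=r_x^2+r_y^2$, $S^2=s_x^2+s_y^2$, and that $r_z\ge 0$. In these coordinates $\rho_{01}=R/2$ and $\rho_{00}\rho_{11}=(1-r_z^2)/4$, so the two inequalities become the statement that the functionals $M_1(\rho)=|\rho_{01}|$ and $M_2(\rho)=|\rho_{01}|^2/(\rho_{00}\rho_{11})$ do not increase: condition one reads $M_1(\sigma)\le M_1(\rho)$, and condition two, after clearing denominators, reads $M_2(\sigma)\le M_2(\rho)$. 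Recognising the inequalities as monotonicity of these two quantities (note $M_2\in[0,1]$, with $M_2=1$ iff pure) is the conceptual core; everything else is verification.

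For the ``only if'' direction I would show that $M_1$ and $M_2$ are non-increasing under SIO, and then under IC. On a qubit an SIO Kraus operator is either diagonal $\mathrm{diag}(a,b)$ or anti-diagonal $\left(\begin{smallmatrix}0&a\\ b&0\end{smallmatrix}\right)$, while IC additionally permits ``collapse'' operators with one-dimensional image, which carry no off-diagonal weight. A one-line computation shows that every permutation-type Kraus operator $K_n$ leaves the ratio of its sub-normalised output $\sigma_n=K_n\rho K_n^{\dagger}$ exactly equal to $M_2(\rho)=:t$, and scales $|\rho_{01}|$ by $|a_n b_n|$. Writing $\sigma=\sum_n\sigma_n$, the triangle inequality gives $|\sigma_{01}|\le\sum_n|(\sigma_n)_{01}|=\sqrt{t}\sum_n\sqrt{(\sigma_n)_{00}(\sigma_n)_{11}}$, and Cauchy--Schwarz together with $\sum_n(\sigma_n)_{00}\le\sigma_{00}$ and $\sum_n(\sigma_n)_{11}\le\sigma_{11}$ yields $M_2(\sigma)\le t$; the same bookkeeping with $\sum_nK_n^{\dagger}K_n=\id$ gives $M_1(\sigma)\le M_1(\rho)$. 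Collapse operators only add to the diagonal and drop out of the off-diagonal sums, so both bounds survive for IC, proving necessity for both classes at once.

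For the ``if'' direction it suffices to build an SIO---hence an IC channel, since $\text{SIO}\subset\text{IC}$---whenever the two inequalities hold. I would take the two-element channel with diagonal Kraus operator $K_1=\mathrm{diag}(\sqrt p,\sqrt q)$ and anti-diagonal operator $K_2=\left(\begin{smallmatrix}0&\sqrt{1-q}\\ \sqrt{1-p}&0\end{smallmatrix}\right)$, which is automatically trace preserving and produces $\sigma_{00}=p\,\rho_{00}+(1-q)\rho_{11}$ and $|\sigma_{01}|=\bigl(\sqrt{pq}+\sqrt{(1-p)(1-q)}\bigr)|\rho_{01}|$. Interpreting $(\sqrt p,\sqrt{1-p})$ and $(\sqrt q,\sqrt{1-q})$ as unit vectors whose inner product is pinned to $S/R$ by the coherence equation makes that constraint transparent and reduces the population equation to a one-parameter family.

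The main obstacle is precisely this last region-matching step: verifying that, as $(p,q)$ ranges over $[0,1]^2$, the pair $(s_z,S)$ sweeps out exactly the region cut out by the two inequalities and nothing more. A clean way to organise it is to saturate condition two first---achieving $S^2/(1-s_z^2)=R^2/(1-r_z^2)$ with the extremal choice of $(p,q)$---and then reach the interior by post-composing with a pure-dephasing SIO that lowers $S$ at fixed $s_z$; one must split according to whether $|s_z|\ge|r_z|$ (where the populations are made more imbalanced, so the mixed diagonal/anti-diagonal channel above is essential) or $|s_z|<|r_z|$ (where dephasing followed by symmetrising already works). Confirming that the extremal channels exactly saturate the two monotone bounds---so that necessity is also sufficient---is where the real effort lies; the algebra is elementary but demands care across the several sign and ordering cases.
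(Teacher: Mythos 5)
The paper does not actually prove this lemma: it is imported verbatim from the cited works of Shi \emph{et al.} and Streltsov \emph{et al.} and used as a black box (e.g.\ in the PIO/SIO/IC single-qubit examples of the Supplemental Material), so there is no in-paper proof to compare against. Judged on its own, your reconstruction is sound and follows the same logic as the cited literature. The conceptual identification is correct: after the free incoherent unitaries reduce everything to real non-negative coherences, the two inequalities are exactly $M_1(\sigma)\le M_1(\rho)$ and $M_2(\sigma)\le M_2(\rho)$ for $M_1(\rho)=|\rho_{01}|$ (the $\ell_1$-norm coherence) and $M_2(\rho)=|\rho_{01}|^2/(\rho_{00}\rho_{11})$, since $|\rho_{01}|=\tfrac12\sqrt{r_x^2+r_y^2}$ and $\rho_{00}\rho_{11}=\tfrac14(1-r_z^2)$. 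Your necessity argument is complete: permutation-type qubit Kraus operators preserve $M_2$ branchwise and scale $|\rho_{01}|$ by $|a_nb_n|$, collapse-type IC operators contribute nonnegatively to the diagonal and nothing to the off-diagonal, and the triangle inequality plus Cauchy--Schwarz with $\sum_nK_n^\dagger K_n=\id$ closes both bounds. The sufficiency construction also works, and the ``region-matching'' you flag as the main obstacle does go through cleanly with the split you propose: for $|s_z|\ge|r_z|$ the curve $p(1-p)\rho_{00}^2=q(1-q)\rho_{11}^2$ (the AM--GM equality condition for saturating $M_2$) connects $(p,q)=(1,1)$ to $(1,0)$ and $(0,0)$ to $(0,1)$, sweeping all $s_z$ with $|s_z|\ge|r_z|$ at maximal coherence; for $|s_z|<|r_z|$ the choice $p=q$ keeps the coherence multiplier equal to $1$ and sweeps $s_z\in[-r_z,r_z]$; composing with the SIO dephasing $\rho\mapsto(1-\epsilon)\rho+\epsilon Z\rho Z$ then fills in the interior at fixed $s_z$. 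Only cosmetic points remain: state explicitly that the degenerate cases $r_x^2+r_y^2=0$ and $s_z^2=1$ (where the second inequality is read as forcing $S=0$) are handled separately, and note that $M_2=1$ characterizes pure states \emph{with nonzero coherence}.
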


Lastly, we also consider LICC protocols that succeed in coherence states transformation only stochastically. Analogous to stochastic LOCC (\textbf{SLOCC}) in entanglement theory, we call these operations stochastic LICC operations and use the notation: \textbf{SLICC}. 

%%%%%%%%%%%%%%%%%%%%%%%%%%%%%%%%%%%%%%%%%%%%
%%%%%%%%%%%%%%%%%%%%%%%%%%%%%%%%%%%%%%%%%%%%
%%%%%%%%%%%%%%%%%%%%%%%%%%%%%%%%%%%%%%%%%%%%

\section{Proof of Lemma \ref{lemma_corank}}\label{sec:proof_corank}
\begin{proof}
Without loss of generality, consider a bipartite coherence pure state $\ket{\psi}=\sum_{i,t=1}^{N,M}\psi_{it}\ket{i}\otimes\ket{t}$ with the number of non-zero product terms in the fixed basis being $NM$. Then for any LICC (resp.~SLICC) protocol with outcome $k$ in the first round, the resulting state is $\ket{\psi'}\equiv F^{(k)}\ket{\psi}=\sum_{i,t=1}^{N,M}c_i^{(k)}\psi_{it}\ket{j(i)}\otimes\ket{t}$, where $F^{(k)}=\sum_{i} c^{(k)}_i \ket{j(i)}\bra{i}$. It is clear that $\ket{\psi'}$ can be expressed as a sum of product terms with no more than $MN$ terms. Consequently, the number of non-zero product terms will not increase as the LICC (resp.~SLICC) protocol continues. 
\end{proof}

%%%%%%%%%%%%%%%%%%%%%%%%%%%%%%%%%%%%%%%%%%%%
%%%%%%%%%%%%%%%%%%%%%%%%%%%%%%%%%%%%%%%%%%%%
%%%%%%%%%%%%%%%%%%%%%%%%%%%%%%%%%%%%%%%%%%%%

\section{Proof of Theorem \ref{The_LICCeq}}\label{sec:proof_LICCeq}
\begin{proof}
It is clear that if $\ket{\psi}$ and $\ket{\phi}$ are LIU equivalent, they are also LICC equivalent since LIU operations performed by each party constitute a special case of LICC protocols.

Next, suppose that two $N$-partite states $\ket{\psi}$ and $\ket{\phi}$ are LICC equivalent, i.e., there exists an LICC protocol that converts $\ket{\psi}$ to $\ket{\phi}$, which consists of many rounds of local IC operations and communications between the parties. Suppose, without loss of generality, that Alice ($A_1$) performs the first local IC operation, yielding the ensemble $\mathcal{E}=\{p_k,\ket{\psi_k}\}$. Since the reduced state of the remaining parties is not changed by Alice's operation, we have
\begin{equation}
\rho^{\psi}_{A_2...A_N}=\sum_kp_k\Tr_{A_1}({\ketbra{\psi_k}{\psi_k}}),
\end{equation} 
where $\rho^{\psi}_{A_2...A_N} = \Tr_{A_1}(\ket{\psi}\bra{\psi})$. Note that, the average entropy is unchanged~\cite{Bennett00}: $S(\rho^{\psi}_{A_1})=S(\rho^{\psi}_{A_2...A_N})=\sum_kp_kS(\Tr_{A_1}({\ketbra{\psi_k}{\psi_k}}))$. From the strict concavity of the von Neumann entropy~\cite{Nielsen00}, it must hold that $S(\Tr_{A_1}({\ketbra{\psi_k}{\psi_k}}))=S(\Tr_{A_1}({\ketbra{\psi}{\psi}}))$, for every $k$, i.e., 
\begin{eqnarray}
\ket{\psi_k}=U_k^{A_1}\otimes I_{A_2...A_N}\ket{\psi},
\end{eqnarray}
where $U^{A_1}_k$ is a unitary operation acting on Alice's system. Since, by assumption, Alice has to perform incoherent operations, then $U_k^{A_1}$ must be local incoherent unitary for all $k$.

Alice can choose a label $k$ with probability $p_k$ as her ``measurement result" and performs the deterministic LIU operation $U^{A_1}_k$ on the state $\ket{\psi}$. Other parties' operations follow similarly so that deterministic LIU operations of the LICC protocol are sufficient to obtain the final the state $\ket{\phi}$. Hence, we conclude that if $\ket{\phi}$ and $\ket{\psi}$ are LICC equivalent, they are also LIU equivalent. 
\end{proof}

%%%%%%%%%%%%%%%%%%%%%%%%%%%%%%%%%%%%%%%%%%%%
%%%%%%%%%%%%%%%%%%%%%%%%%%%%%%%%%%%%%%%%%%%%
%%%%%%%%%%%%%%%%%%%%%%%%%%%%%%%%%%%%%%%%%%%%

\section{Proof of Theorem \ref{The_LSICC}}\label{sec:proof_SLICC}
First, we have following lemma:
\begin{lemma}
If $E_0$ is an IC measurement, then IC measurements $E_i$ can be constructed, such that $\sum_iE_i=I$.
\end{lemma}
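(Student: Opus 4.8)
The plan is to reduce the completion of $E_0$ to a decomposition problem for a single positive operator, and then to exploit the fact that incoherent operators are flexible enough to realize \emph{any} rank-one positive operator. First I would record that, being a (sub-normalized) measurement operator, $E_0$ satisfies $E_0^\dagger E_0 \le I$, so that $G := I - E_0^\dagger E_0$ is positive semidefinite. The goal then becomes to produce incoherent operators $E_1,\dots,E_m$ with $\sum_{n\ge 1} E_n^\dagger E_n = G$; together with $E_0$ this gives $\sum_{n\ge 0} E_n^\dagger E_n = I$, i.e. the completeness relation of a full incoherent measurement (equivalently, the associated effects $E_n^\dagger E_n$ sum to $I$).

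The key step I would isolate as a sub-claim is that every rank-one positive operator $\ell\,|\chi\rangle\langle\chi|$ with $\ell\ge 0$ equals $K^\dagger K$ for some incoherent $K$. Writing $|\chi\rangle=\sum_i \chi(i)\ket{i}$ and fixing any single basis vector $\ket{1}$, I would set $K=\sqrt{\ell}\sum_i \overline{\chi(i)}\,\ket{1}\bra{i}$. This $K$ sends every basis state to a multiple of the one vector $\ket{1}$, so it is incoherent in the required form $K=\sum_i c(i)\ket{j(i)}\bra{i}$ with the constant (many-to-one) index function $j(i)\equiv 1$, and a one-line computation gives $K^\dagger K=\ell\,|\chi\rangle\langle\chi|$. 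Granting this, I would take the spectral decomposition $G=\sum_n \ell_n\,|\chi_n\rangle\langle\chi_n|$ with $\ell_n\ge 0$ and, for each $n$, define $E_n=\sqrt{\ell_n}\sum_i \overline{\chi_n(i)}\,\ket{1}\bra{i}$. Each $E_n$ is incoherent by the sub-claim, and $\sum_{n\ge 1}E_n^\dagger E_n=\sum_n \ell_n |\chi_n\rangle\langle\chi_n|=G$, which closes the argument.

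The step I expect to be the genuine obstacle — and the reason the statement requires proof at all — is that the \emph{obvious} single complement fails: neither $I-E_0$ nor $\sqrt{\,I-E_0^\dagger E_0\,}$ is incoherent in general, since $E_0^\dagger E_0$ acquires off-diagonal entries precisely when the index function of $E_0$ is many-to-one, and taking an operator square root mixes the basis vectors. The resolution is to abandon the search for one complementary operator and instead use \emph{many} incoherent effects; the enabling observation is that an incoherent operator which collapses all inputs onto a single basis vector realizes an arbitrary rank-one positive operator, after which the completion follows at once from the spectral theorem. I would finally note that the completing operators $E_n$ are IC (they need not be SIO), which is exactly the content claimed in the lemma and suffices for the use made of it in the proof of Theorem~\ref{The_LSICC}.
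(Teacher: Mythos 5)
Your proof is correct and follows essentially the same route as the paper: both complete $E_0$ by taking the spectral decomposition of the positive remainder and realizing each rank-one term $\lambda\ket{\chi}\bra{\chi}$ as $M^\dagger M$ for an incoherent Kraus operator $M$ that collapses everything onto a single basis vector. The only (immaterial) differences are that the paper writes the remainder as $I-E_0$ rather than $I-E_0^\dagger E_0$ (reading $E_0$ as an effect rather than a Kraus operator) and sends the $i$-th spectral component to $\ket{i}$ instead of a fixed $\ket{1}$.
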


\begin{proof}
Let $E_1=I-E_0$. Since $E_1\geq0$, we have spectral decomposition for $E_1=\sum_i\lambda_i\ketbra{\psi_i}{\psi_i}$. Setting $M_i=\sqrt{\lambda_i}\ketbra{i}{\psi_i}$, we find that $E_i=M_i^{\dag}M_i$ is IC POVM for every $i$. 
\end{proof}

Now, we give the proof of Theorem \ref{The_LSICC}.
\begin{proof} 
If $\ket{\phi}=A_1\otimes A_2\otimes \cdots \otimes A_N\ket{\psi}$ holds with SIO (invertible IC) operators $A_k$ with $k=1,2,...,N$, then we can find an SLICC protocol for the parties to transform $\ket{\psi}$ into $\ket{\phi}$ with a positive probability of success. Indeed, each party $k$ can perform an M-outcome IC measurement $\{F_0^{(k)},F_1^{(k)},\ldots,F_M^{(k)}\}$, where $F_0^{(k)}= \sqrt{\frac{p_k}{\bra{\psi_k}A_k^{\dag}A_k\ket{\psi_k}}}A_k$ with $0<  p_k\leq1$. 
It is easy to check that after all parties have performed their corresponding  measurements, the transformation from $\ket{\psi}$ to $\ket{\phi}$ will succeed with probability $p_1p_2\cdots p_N$. The analysis also holds for $\ket{\phi}$ converting into $\ket{\psi}$ by observing that $\ket{\psi}=A^{-1}_1\otimes A^{-1}_2\otimes \cdots \otimes A^{-1}_N\ket{\phi}$.

Conversely, suppose that there is an SLICC protocol, consisting of IC measurements $F^{(k)}$ performed by the $k$-th party, such that $\ket{\psi}$ is transformed into $\ket{\phi}$. Then there must exist one branch of all possible protocol outcomes, say $(x_1,x_2,\cdots,x_N)$, in which $\ket{\phi}$ is obtained. Tracking the performed measurement of each party, $F_{x_k}^{(k)}$, the corresponding IC operators $A_k$ are obtained as follows: 
\begin{eqnarray}\label{eq:The22}
\begin{aligned}
&\frac{1}{\sqrt{p_k}}I_{A_1\cdots A_{k-1}A_{k+1}\cdots A_N}\otimes F_{x_k}^{(k)}\ket{\psi^{(k-1)}} \\
&=I_{A_1\cdots A_{k-1}A_{k+1}\cdots A_N}\otimes A_k\ket{\psi^{(k-1)}}  \\
&=\ket{\psi^{(k)}},
\end{aligned}
\end{eqnarray}
with $p_k=\bra{\psi^{(k-1)}}F_{x_k}^{(k)\dag}F_{x_k}^{(k)}\ket{\psi^{(k-1)}}$, $\ket{\psi^{(0)}}=\ket{\psi}$, and $\ket{\psi^{(N)}}=\ket{\phi}$. In summary,
\begin{eqnarray}\label{eq:The23}
\ket{\phi}=A_1\otimes A_2\otimes \cdots \otimes A_N\ket{\psi}.
\end{eqnarray}
Lastly, $A_k$ must be full rank (hence revertible) since the number of non-zero product terms in $\ket{\psi}$ and $\ket{\phi}$ must be equal, a consequence of Lemma~\ref{lem:product}. 
\end{proof}

%%%%%%%%%%%%%%%%%%%%%%%%%%%%%%%%%%%%%%%%%%%%
%%%%%%%%%%%%%%%%%%%%%%%%%%%%%%%%%%%%%%%%%%%%
%%%%%%%%%%%%%%%%%%%%%%%%%%%%%%%%%%%%%%%%%%%%

\section{Detailed Analysis of Example [Characterization of two-qubit coherence states]}\label{sec:proof_ex_SLICC2}

From Lemma~\ref{lem:product}, we know that the classification is restricted by the number of product terms under SLICC, thus we list these potential equivalent classes based on the number of product terms. The classification of the number of product terms equal to 1 is trivial: every fixed basis $\ket{ij}$ with $i,j\in\{0,1\}$ can be converted to each other via local SIO operators. 

When the number of product terms equals 2, we can conclude the following three classes: $a\ket{00}+b\ket{01}$, $a\ket{00}+c\ket{10}$, and $a\ket{00}+d\ket{11}$, after considering local SIO operators allowed by each party. A similar method can be used when the number of product terms equals  3.

When the number of product terms equals 4, let $\ket{\psi}=a\ket{00}+b\ket{01}+c\ket{10}+d\ket{11}$. If $\ket{\psi}$ can be converted to $\ket{\phi}=a'\ket{00}+b'\ket{01}+c'\ket{10}+d'\ket{11}$ under SLICC, then there exist local SIO operators $A$ and $B$, such that $\ket{\phi}=A\otimes B\ket{\psi}$. Each of the SIO operators $A$, $B$ can be expressed in two forms: $\begin{pmatrix} x & 0   \\ 0 & y \end{pmatrix}$ or $\begin{pmatrix} 0 & z   \\ w & 0  \end{pmatrix}$. If $A=\begin{pmatrix} x & 0   \\ 0 & y \end{pmatrix}$ and $B=\begin{pmatrix} z & 0   \\ 0 & w \end{pmatrix}$ or $A=\begin{pmatrix} 0 & x   \\ y & 0 \end{pmatrix}$ and $B=\begin{pmatrix} 0 & z   \\ w & 0 \end{pmatrix}$, we find that $\frac{ad}{bc}=\frac{a'd'}{b'c'}$ after  the SIO operations. If $A=\begin{pmatrix} 0 & x   \\ y & 0 \end{pmatrix}$ and $B=\begin{pmatrix} z & 0   \\ 0 & w \end{pmatrix}$ or $A=\begin{pmatrix} x & 0   \\ 0 & y \end{pmatrix}$ and $B=\begin{pmatrix} 0 & z   \\ w & 0 \end{pmatrix}$, we find that $\frac{ad}{bc}=\frac{b'c'}{a'd'}$ after  the SIO operations. Denote by $\mathcal{r}=\frac{ad}{bc}$, we conclude that states with the same number $\mathcal{r}$ or $\frac{1}{\mathcal{r}}$ are in the same equivalent class under such a transformation.

%%%%%%%%%%%%%%%%%%%%%%%%%%%%%%%%%%%%%%%%%%%%
%%%%%%%%%%%%%%%%%%%%%%%%%%%%%%%%%%%%%%%%%%%%
%%%%%%%%%%%%%%%%%%%%%%%%%%%%%%%%%%%%%%%%%%%%

\section{Proof of Corollary \ref{Cor_SLICC}}\label{sec:proof_Cor_SLICC}
\begin{proof}
As discussed above, we find the following 4 operators
\begin{eqnarray}
\begin{pmatrix} \frac{\alpha}{b\beta} & 0   \\ 0 & \frac{1}{d} \end{pmatrix}\otimes\begin{pmatrix} \frac{d\alpha}{c} & 0   \\ 0 & \beta \end{pmatrix},
\end{eqnarray}
\begin{eqnarray}
\begin{pmatrix} \frac{\alpha}{b\beta} & 0   \\ 0 & \frac{1}{c} \end{pmatrix}\otimes\begin{pmatrix} 0 & \frac{c\alpha}{d}   \\ \beta & 0 \end{pmatrix},
\end{eqnarray}
\begin{eqnarray}
\begin{pmatrix} 0 & \frac{\alpha}{b\beta}   \\ \frac{1}{b} & 0 \end{pmatrix}\otimes\begin{pmatrix} \frac{b\alpha}{a} & 0   \\ 0 & \beta \end{pmatrix},
\end{eqnarray}
\begin{eqnarray}
\begin{pmatrix} 0 & \frac{\alpha}{c\beta}   \\ \frac{1}{a} & 0 \end{pmatrix}\otimes\begin{pmatrix} 0 & \frac{a\alpha}{b}   \\ \beta & 0 \end{pmatrix},
\end{eqnarray}
can transform $\ket{\psi}=a\ket{00}+b\ket{01}+c\ket{10}+d\ket{11}$ to $\ket{\psi'}=\alpha(\ket{00}+\ket{01}+\ket{10})+\beta\ket{11}$, where $0<|\beta|<1$ and $\alpha=\frac{\sqrt{1-|\beta|^2}}{\sqrt3}>0$. Thus, the state $\ket{\psi}=a\ket{00}+b\ket{01}+c\ket{10}+d\ket{11}$ is SLICC equivalent to the state $\ket{\psi'}=\alpha(\ket{00}+\ket{01}+\ket{10})+\beta\ket{11}$ via those operators above. 
\end{proof}

%%%%%%%%%%%%%%%%%%%%%%%%%%%%%%%%%%%%%%%%%%%%
%%%%%%%%%%%%%%%%%%%%%%%%%%%%%%%%%%%%%%%%%%%%
%%%%%%%%%%%%%%%%%%%%%%%%%%%%%%%%%%%%%%%%%%%%

\section{Proof of Theorem \ref{coherencemonotones}}\label{proof_coherencemonotones}

\begin{proof}
It is evident that if a state $\rho$ is an incoherent state, then $\mu(M^{\mathcal{O}}_a(\rho))=0$ and $\mu(M^{\mathcal{O}}_s(\rho))$ reaches the maximal source volume, i.e., $C^{\mathcal{O}}_a(\rho)=C^{\mathcal{O}}_s(\rho)=0$ if $\rho\in\mathcal{I}$. We also have $C^{\mathcal{O}}_a(\rho)\geq0$ and $C^{\mathcal{O}}_s(\rho)\geq0$ for all quantum states. Condition $(B1)$ therefore holds.

Let $\sigma_\rho$ be the state that can be reached from $\rho$ via free operations $\mathcal{O}$. By definition, $M^{\mathcal{O}}_a(\sigma_\rho)\subset M^{\mathcal{O}}_a(\rho)$, which means that $C^{\mathcal{O}}_a(\rho)\geq C^{\mathcal{O}}_a(\sigma_\rho)$. On the other hand, let $\rho$ be the state that can be reached from $\rho_\sigma$ via free operations $\mathcal{O}$. Then, $M^{\mathcal{O}}_s(\rho)\subset M^{\mathcal{O}}_s(\rho_\sigma)$, which means that $C^{\mathcal{O}}_s(\rho_\sigma)\geq C^{\mathcal{O}}_s(\rho)$. Condition $(B2)$ therefore holds. For Conditions $(B3)$ and $(B4)$, we can construct examples to show they do not hold (see Supplemental Material \ref{sec:antiexample}).
\end{proof}

%%%%%%%%%%%%%%%%%%%%%%%%%%%%%%%%%%%%%%%%%%%%
%%%%%%%%%%%%%%%%%%%%%%%%%%%%%%%%%%%%%%%%%%%%
%%%%%%%%%%%%%%%%%%%%%%%%%%%%%%%%%%%%%%%%%%%%

\section{$C_a^{\mathcal{O}}(\rho)$ and $C_s^{\mathcal{O}}(\rho)$ are coherence monotones that do not satisfy condition $(B3)$ and $(B4)$}
\label{sec:antiexample}
In this section, we will first discuss the accessible coherence $C_a^{\mathcal{O}}(\rho)$. For condition $(B1)$, we can see that if $\rho\in\mathcal{I}$, then $C_a^{\mathcal{O}}(\rho)=0$. For condition $(B2)$, it is easy to see that $M_a^{\mathcal{O}}(\Phi(\rho))\subset M_a^{\mathcal{O}}(\rho)$ for any incoherent operations $\Phi$, thus $C_a^{\mathcal{O}}(\rho)\geq C_a^{\mathcal{O}}(\Phi(\rho))$. 

Now, we will show that $C_a^{\mathcal{O}}(\rho)$ that do not satisfy condition $(B4)$. A single-qubit state $\rho$ can be represented as $\rho=\frac{1}{2}\begin{pmatrix} 1+z&te^{-i\theta}\\ te^{i\theta}&1-z \end{pmatrix}$, where $-1\leq z\leq1$, $0\leq t\leq1$, and $0\leq\theta\leq\pi$. Since accessible coherence $C_a^{\mathcal{O}}(\rho)$ satisfies condition $(B2)$, without loss of generality, we can always consider the state as $\rho=\frac{1}{2}\begin{pmatrix} 1+z&t\\ t&1-z \end{pmatrix}$, where $t^2+z^2\leq1$. We can see $\rho=\lambda_1\ket{\lambda_1}\bra{\lambda_1}+\lambda_2\ket{\lambda_2}\bra{\lambda_2}$, where $\lambda_1=\frac{1+\sqrt{t^2+z^2}}{2}$, $\lambda_2=\frac{1-\sqrt{t^2+z^2}}{2}$ and

$$\ket{\lambda_1}=\frac{t}{2}\frac{z+\sqrt{t^2+z^2}}{t^2+z^2+z\sqrt{t^2+z^2}}\ket{0}+\frac{t}{2}\frac{t}{t^2+z^2+z\sqrt{t^2+z^2}}\ket{1},$$ 

$$\ket{\lambda_2}=\frac{t}{2}\frac{z-\sqrt{t^2+z^2}}{t^2+z^2-z\sqrt{t^2+z^2}}\ket{0}+\frac{t}{2}\frac{t}{t^2+z^2-z\sqrt{t^2+z^2}}\ket{1}.$$ 

Let $t=z=0.1$, we can find $C_a^{\mathcal{O}}(\rho)-\lambda_1 C_a^{\mathcal{O}}(\ket{\lambda_1})-\lambda_2 C_a^{\mathcal{O}}(\ket{\lambda_2})=0.0994$. Thus, $C_a^{\mathcal{O}}$ is not convex.

For condition $(B3)$, consider a general amplitude damping channel~\cite{Nielsen00} with $E_0=\sqrt{p}\begin{pmatrix} 1&0\\ 0&\sqrt{1-\gamma} \end{pmatrix}$, $E_1=\sqrt{p}\begin{pmatrix} 0&\sqrt{\gamma}\\ 0&0 \end{pmatrix}$, $E_2=\sqrt{1-p}\begin{pmatrix} \sqrt{1-\gamma}&0\\ 0&1 \end{pmatrix}$ and $E_3=\sqrt{1-p}\begin{pmatrix} 0&0\\ \sqrt{\gamma}&0 \end{pmatrix}$. Let $p=0.99$ and $\gamma=t=z=0.5$, we can find $C_a^{\mathcal{O}}(\rho)- \sum_nC_a^{\mathcal{O}}(\rho_n)=-0.1912$, thus condition $(B3)$ does not hold.

We will now show that $C_s^{\mathcal{O}}(\rho)$ does not satisfy condition $(B4)$. Let $t=z=0.1$, we can find $C_s^{\mathcal{O}}(\rho)-\lambda_1 C_s^{\mathcal{O}}(\ket{\lambda_1})-\lambda_2 C_s^{\mathcal{O}}(\ket{\lambda_2})=0.6930$. Thus, $C_s^{\mathcal{O}}$ is not convex. For condition $(B3)$, consider a general amplitude damping channel~\cite{Nielsen00} with $E_0=\sqrt{p}\begin{pmatrix} 1&0\\ 0&\sqrt{1-\gamma} \end{pmatrix}$, $E_1=\sqrt{p}\begin{pmatrix} 0&\sqrt{\gamma}\\ 0&0 \end{pmatrix}$, $E_2=\sqrt{1-p}\begin{pmatrix} \sqrt{1-\gamma}&0\\ 0&1 \end{pmatrix}$ and $E_3=\sqrt{1-p}\begin{pmatrix} 0&0\\ \sqrt{\gamma}&0 \end{pmatrix}$. Let $p=0.99$ and $\gamma=0.8$, $t=z=0.4$, we find that $C_s^{\mathcal{O}}(\rho)- \sum_nC_s^{\mathcal{O}}(\rho_n)=-0.2123$, thus the condition $(B3)$ does not hold.

%%%%%%%%%%%%%%%%%%%%%%%%%%%%%%%%%%%%%%%%%%%%%%%%%%%%%%%%%%%%%%%%%%%%%%%%%%%%%%%%%%%%%%%%%%%%%%%%%%%%%%%%%%%%%%%%%%%%%%%%%%%%%%%%%%%%%%%%%%%%%%%%%%%%%%%
%%%%%%%%%%%%%%%%%%%%%%%%%%%%%%%%%%%%%%%%%%%%%%%%%%%%%%%%%%%%%%%%%%%%%%%%%%%%%%%%%%%%%%%%%%%%%%%%%%%%%%%%%%%%%%%%%%%%%%%%%%%%%%%%%%%%%%%%%%%%%%%%%%%%%%

\section{Proof of Theorem \ref{SC_LICC}}\label{sec:LOCC trans}

In this section, $\mu$ is chosen as a measure on the set of LIU equivalence classes. In the proof of $Lemma$ \ref{Chitambar16}, since the IC operators Alice and Bob used are full rank. Thus, the local IC operators are local SIO operators. Then, we have

%\begin{theorem} Suppose $\ket{\psi}_{AB}$ and $\ket{\phi}_{AB}$ have reduced density matrices that are diagonal in the incoherent bases. i.e., $\ket{\psi}_{AB}=\sum_i\sqrt{\psi_i}\ket{ii}$ and $\ket{\phi}_{AB}=\sum_i\sqrt{\phi_i}\ket{ii}$. $\ket{\phi}_{AB}$ can be converted into $\ket{\psi}_{AB}$ via $\mathcal{O}\in\{LSICC,LICC\}$ if and only if the squared Schmidt coefficients of $\ket{\psi}_{AB}$ majorize those of $\ket{\phi}_{AB}$.
%\end{theorem}

First, we will show the explicit formula of source coherence for pure states transforms via $\mathcal{O}\in\{LSICC,LICC\}$. From $Lemma$~\ref{Du15}, we can see the source set of $\ket{\psi}$ is given by
\begin{align}
M_s^{\mathcal{O}}(\psi) = \{\ket{\phi}\in {\cal H} \ \mbox{s.t.} \ \lambda (\phi) \prec \lambda (\psi)\}, \label{Eq_MsMastates11}
\end{align}

Because any pure states in a LIU equivalence class can be seen as the vector $\lambda(\psi)$, we can associate the set given in Eq. (\ref{Eq_MsMastates11}) the following set of sorted vectors in $\R^d$:
\begin{align}
 \mathcal{M}_s^{\mathcal{O}}(\psi) = \{\lambda^\downarrow \in \R^d \ \mbox{s.t.} \ \lambda^\downarrow \prec \lambda(\psi)\},\label{Eq_MsMa_LICC}
\end{align}
where $d$ denotes the Schmidt number of $\ket{\psi}$.

The set given in Eq.(\ref{Eq_MsMa_LICC}) is a convex polytope, and as shown in~\cite{Sauerwein15}, the source set~Eq.(\ref{Eq_MsMa_LICC}) is a simple polytope~\cite{Ziegler95,Brion97}. The simple polytope of the set $\mathcal{M}_s^{\mathcal{O}}(\psi)$ is the some as the polytope of the source set of entanglement as shown in~\cite{Sauerwein15}, thus the volume of $\mathcal{M}_s^{\mathcal{O}}(\psi)$ is
\begin{eqnarray}\label{eq:Vs1}
V_s^{\mathcal{O}}(\ket{\psi})=\frac{1}{d!}\frac{\sqrt{d}}{(d-1)!}\sum_{\pi\in\Sigma_d}\frac{[\sum_{k=1}^d\pi(k)\lambda_k-\frac{d+1}{2}]^{d-1}}{\Pi_{k=1}^{d-1}\pi(k)-\pi(k+1)},
\end{eqnarray}
where $\pi$ denotes an element of the permutation group $\Sigma_d$ of $d$ elements and $\mathcal{O}\in\{LSICC,LICC\}$. Note that for the incoherent state $\ket{\psi_{incoh}}$, the vector $\lambda(\psi_{incoh})=(1,0,0,...,0)$ can be obtained from any other states via $\mathcal{O}\in\{LSICC,LICC\}$, and therefore its source volume is the maximum, i.e., $V_s^{\mathcal{O}}(\ket{\psi_{incoh}})=\sup_{\phi}V_s^{\mathcal{O}}(\ket{\phi})$. The volume is $V_s^{\mathcal{O}}(\ket{\psi_{incoh}})=\frac{\sqrt{d}}{d!(d-1)!}$. For a maximal correlated state $\ket{\psi^+}$, the corresponding vector $\lambda(\psi^{+})=\frac{1}{\sqrt d}(1,1,1,...,1)$. It is straightforward to see that the volume is $V_s^{\mathcal{O}}(\ket{\psi^{+}})=0$.  

Thus, the source coherence of a pure state $\ket{\phi}$ with sorted vector $\lambda(\phi)$ is given by
\begin{eqnarray}\label{eq:Cs_high3}
C_s^{\mathcal{O}}(\ket{\psi})=1-\sum_{\pi\in\sum_d}\frac{[\sum_{k=1}^{d}\pi(k)\lambda_k-\frac{d+1}{2}]^{d-1}}{\Pi_{k=1}^{d-1}\pi(k)-\pi(k+1)},
\end{eqnarray}
where $\mathcal{O}\in\{LSICC,LICC\}$. 

%%%%%%%%%%%%%%%%%%%%%%%%%%%%%%%%%%%%%%%%%%%%%%%%%%%%%%%%%%%%%%%%%%%%%%%%%%%%%%%%%%%%%%%%%%%%%%%%%%%%%%%%%%%%%%%%%%%%%%%%%%%%%%%%%%%%%%%%%%%%%%%%%%%%%%%
\section{Source coherence for pure states transforms via IC and SIO}

In this section, we will discuss source coherence for pure states transforms via $\mathcal{O}\in\{SIO,IC\}$. The method of this proof is similar to the proof of Theorem \ref{SC_LICC}. Note that $\mu$ is chosen as a measure on the set of IU equivalence classes. From $Lemma$~\ref{Du15}, the source set of $\ket{\psi}$ is given by
\begin{align}
 M_s^{\mathcal{O}}(\psi) = \{\ket{\phi}\in {\cal H} \ \mbox{s.t.} \ \lambda (\Delta(\phi)) \prec \lambda (\Delta(\psi))\}, \label{Eq_MsMastates1}
\end{align}

Because any pure states in a IU equivalence class can be seen as the vector $\lambda(\Delta(\psi))$, we can associate the set given in Eq. (\ref{Eq_MsMastates1}) the following sets of sorted vectors in $\R^d$:
\begin{align}
 \mathcal{M}_s^{\mathcal{O}}(\psi) = \{\lambda^\downarrow \in \R^d \ \mbox{s.t.} \ \lambda^\downarrow \prec \lambda(\Delta(\psi))\},\label{Eq_MsMa}
\end{align}
where $d$ denotes the rank of $\Delta(\psi)$; these sets are hence supported on states of the same dimensions as $\Delta(\psi)$.

Then, the set given in Eqs. (\ref{Eq_MsMa}) is also a simple polytope~\cite{Sauerwein15}. Thus the volume of $\mathcal{M}_s^{\mathcal{O}}(\psi)$ is
\begin{eqnarray}\label{eq:Vs2}
V_s^{\mathcal{O}}(\ket{\psi})=\frac{1}{d!}\frac{\sqrt{d}}{(d-1)!}\sum_{\pi\in\Sigma_d}\frac{[\sum_{k=1}^d\pi(k)\lambda_k-\frac{d+1}{2}]^{d-1}}{\Pi_{k=1}^{d-1}\pi(k)-\pi(k+1)},
\end{eqnarray}
where $\pi$ denotes an element of the permutation group $\Sigma_d$ of $d$ elements and $\mathcal{O}\in\{SIO,IC\}$. Note that for the incoherent state $\ket{\psi_{incoh}}$, the vector $\lambda(\Delta(\psi_{incoh}))=(1,0,0,...,0)$ can be obtained from any other states via $\mathcal{O}\in\{SIO,IC\}$, and therefore its source volume is the maximum, i.e., $V_s^{\mathcal{O}}(\ket{\psi_{incoh}})=\sup_{\phi}V_s^{\mathcal{O}}(\ket{\phi})$. The volume is $V_s^{\mathcal{O}}(\ket{\psi_{incoh}})=\frac{\sqrt{d}}{d!(d-1)!}$. For a maximally coherent state $\ket{\psi^+}$, the corresponding vector $\lambda(\Delta(\psi^{+}))=\frac{1}{\sqrt d}(1,1,1,...,1)$. It is straightforward to see that the volume is $V_s^{\mathcal{O}}(\ket{\psi^{+}})=0$.  

Thus, we have following result for a pure state with sorted vector $\lambda(\Delta(\phi))\in\C^d$:

\begin{theorem}The source coherence of a pure state $\ket{\phi}$ with sorted vector $\lambda(\Delta(\phi))$ is given by
\begin{eqnarray}\label{eq:Cs_high1}
C_s^{\mathcal{O}}(\ket{\psi})=1-\sum_{\pi\in\sum_d}\frac{[\sum_{k=1}^{d}\pi(k)\lambda_k-\frac{d+1}{2}]^{d-1}}{\Pi_{k=1}^{d-1}\pi(k)-\pi(k+1)},
\end{eqnarray}
where $\mathcal{O}\in\{SIO,IC\}$. 
\end{theorem}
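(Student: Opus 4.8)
The plan is to reproduce the argument of Theorem~\ref{SC_LICC} almost verbatim, replacing the Schmidt vector by the dephased spectrum $\lambda(\Delta(\psi))$ and the LIU classes by the IU classes. First I would invoke Lemma~\ref{Du15} to pin down the source set: a pure state $\ket{\phi}$ can be transformed into $\ket{\psi}$ under $\mathcal{O}\in\{SIO,IC\}$ exactly when $\lambda(\Delta(\phi))\prec\lambda(\Delta(\psi))$, so that $M_s^{\mathcal{O}}(\psi)$ is precisely the set written in Eq.~(\ref{Eq_MsMastates1}). Here the full-rank (invertibility) issue that complicated the LICC proof does not arise, since Lemma~\ref{Du15} already provides the majorization criterion directly for both $SIO$ and $IC$, which is why the two free-operation classes give the same formula.

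Next, because $\mu$ is taken over IU equivalence classes, I would argue that a pure state is determined, up to IU, by the sorted vector $\lambda^\downarrow(\Delta(\psi))$: an incoherent unitary $U=\sum_i e^{i\theta_i}\ket{i}\bra{\pi(i)}$ can absorb all relative phases of the amplitudes and permute the moduli $|\psi_i|^2$, so the diagonal spectrum of $\Delta(\psi)$ is a \emph{complete} IU invariant for pure states. This identifies $M_s^{\mathcal{O}}(\psi)$ with the set $\mathcal{M}_s^{\mathcal{O}}(\psi)$ of sorted probability vectors majorized by $\lambda(\Delta(\psi))$, namely Eq.~(\ref{Eq_MsMa}), supported on $\mathbb{R}^d$ with $d=\rank\Delta(\psi)$.

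The central geometric step is to recognize $\mathcal{M}_s^{\mathcal{O}}(\psi)$ as the \emph{same} majorization polytope that appears in the entanglement source problem of Sauerwein \emph{et al.}~\cite{Sauerwein15}. I would note that its defining inequalities and its vertex set (the $d!$ permutations of $\lambda(\Delta(\psi))$) coincide with the entanglement case, so that for a generic spectrum it is a simple polytope, and the Lawrence/Brion vertex-decomposition volume formula for simple polytopes applies and yields Eq.~(\ref{eq:Vs2}) with the sum running over $\pi\in\Sigma_d$. As consistency checks I would record that the uniform (maximally coherent) vector $\tfrac1d(1,\dots,1)$ is the minimum in majorization order and gives $V_s^{\mathcal{O}}=0$, while the incoherent vector $(1,0,\dots,0)$ is the maximum and so has the whole simplex as its source set.

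Finally I would compute the maximal source volume from the incoherent vertex, $V_s^{\sup,\mathcal{O}}=\tfrac{\sqrt{d}}{d!\,(d-1)!}$, and substitute both this and Eq.~(\ref{eq:Vs2}) into $C_s^{\mathcal{O}}=1-V_s^{\mathcal{O}}/V_s^{\sup,\mathcal{O}}$; the common prefactor $\tfrac{1}{d!}\tfrac{\sqrt{d}}{(d-1)!}$ then cancels, leaving exactly Eq.~(\ref{eq:Cs_high1}). I expect the only genuine obstacle to be the geometric claim that this majorization polytope is simple and that its Euclidean volume is captured by the Brion formula with the stated normalization constant; once that is borrowed from~\cite{Sauerwein15}, the remainder is the bookkeeping above and the cancellation, which are routine.
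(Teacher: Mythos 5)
Your proposal follows essentially the same route as the paper's own argument: both reduce the source set via Lemma~\ref{Du15} to the majorization polytope of sorted vectors dominated by $\lambda(\Delta(\psi))$, borrow the simple-polytope volume formula from the entanglement analysis of Sauerwein \emph{et al.}, normalize by the incoherent state's source volume $\tfrac{\sqrt{d}}{d!\,(d-1)!}$, and cancel the common prefactor. Your added remarks (that the sorted dephased spectrum is a complete IU invariant, and that invertibility issues do not arise here) are correct refinements of points the paper leaves implicit, but they do not change the proof.
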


%%%%%%%%%%%%%%%%%%%%%%%%%%%%%%%%%%%%%%%%%%%
%%%%%%%%%%%%%%%%%%%%%%%%%%%%%%%%%%%%%%%%%%%
%%%%%%%%%%%%%%%%%%%%%%%%%%%%%%%%%%%%%%%%%%%
\section{Examples of source and accessible volumes for low-dimensional systems}\label{sec:example}
In this section, we will explicitly calculate the accessible coherence and the source coherence for low-dimensional states. For the single-qubit case, we refer interested readers to the Supplemental Material for the detailed calculation.

\begin{example}[LICC and LSICC transformations of two-qubit pure states]

\begin{figure}
\centering
\includegraphics[width=0.4\textwidth]{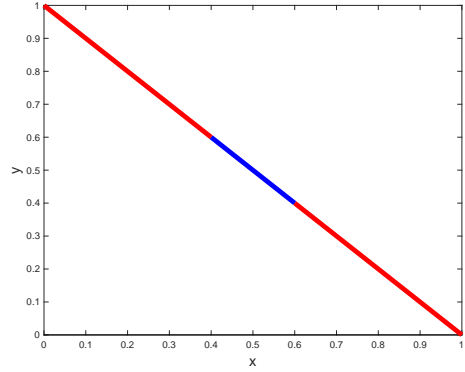}
\caption{(color online). The source set (blue), $M_s(\ket{\psi})$, and the accessible set (red), $M_a(\ket{\psi})$, of the state $\ket{\psi}$ are depicted. Any state in $M_s(\ket{\psi})$ can be transformed to $\ket{\psi}$ via LSICC and LICC and $\ket{\psi}$ can be transformed into any state in $M_a(\ket{\psi})$ via LSICC and LICC. In the figure, $\ket{\psi}=\sqrt{0.6}\ket{01}+\sqrt{0.4}\ket{10}$.}
\label{fig_qubit}
\end{figure}

Let $\ket{\psi}=\sqrt a\ket{01}+\sqrt b\ket{10}$, where $a+b=1$. We consider the state $\ket{\psi'}=\sqrt a\ket{00}+\sqrt b\ket{11}$ with Schmidt coefficients $a$ and $b$. It is then straightforward to find that the accessible volume and source volume are
\begin{align}
V_{a}^{{\mathcal{O}}}(\ket{\psi}) = \sqrt2(x-\frac{1}{2}),
\end{align}

\begin{align}
V_{s}^{{\mathcal{O}}}(\ket{\psi}) = \sqrt2(1-x),
\end{align}
where $x\geq y$ and $x,y\in\{a,b\}$ and $\mathcal{O}\in\{LICC,LSICC\}$. Figure~\ref{fig_qubit} shows this state transformation, thus the accessible coherence and source coherence are 
\begin{align}
C_{a}^{{\mathcal{O}}}(\ket{\psi}) = C_{s}^{{\mathcal{O}}}(\ket{\psi}) =2(1-x).
\end{align}
\end{example}

\subsection{$C_a^{\mathcal{O}}(\rho)$ and $C_s^{\mathcal{O}}(\rho)$ in single-qubit case}\label{sec:single-qubit}
For a single-qubit state, we calculate the coherence monotones by choosing different free operations.

\begin{example}[Accessible Coherence and Source Coherence via PIO]
\begin{figure}
\centering
\includegraphics[width=0.6\textwidth]{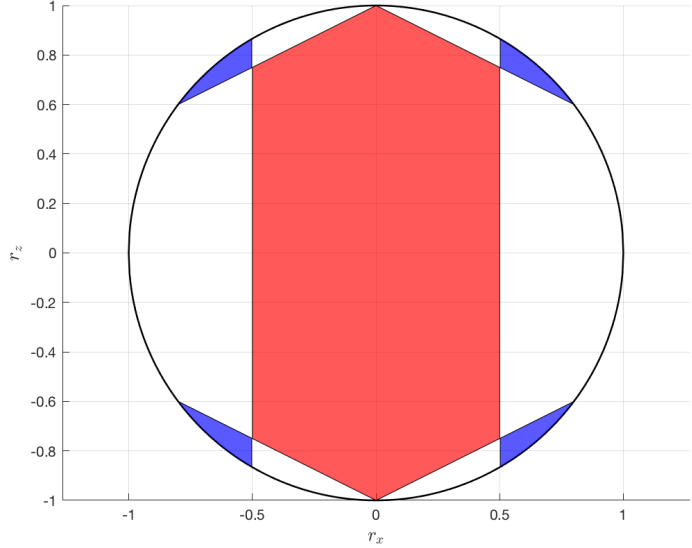}
\caption{(color online). As shown in the figure, the source set (blue corlor), $M_s(\rho)$, and the accessible set (red corlor), $M_a(\rho)$, of the state $\rho$ are depicted. Any state in $M_s(\rho)$ can be transformed to $\rho$ via PIO and $\rho$ can be transformed into any state in $M_a(\rho)$ via PIO. In the figure, the single-qubit $\rho$ has the Bloch vector $(\frac{1}{2},0,\frac{3}{4})$.}
\label{acas_pio}
\end{figure}

As shown in~\cite{Shi1705}, a single-qubit $\rho= \frac{1}{2}
\begin{pmatrix} 1+r_z & r_x+ir_y   \\ r_x-ir_y & 1-r_z 
\end{pmatrix}$ coverted into $\sigma= \frac{1}{2}
\begin{pmatrix} 1+s_z & s_x+is_y   \\ s_x-is_y & 1-s_z 
\end{pmatrix}$ via PIO should satisfy the following equalities:

\begin{eqnarray}
s_x^2+s_y^2\leq r_x^2+r_y^2,
\end{eqnarray}

\begin{eqnarray}
\frac{s_x^2+s_y^2}{r_x^2+r_y^2}\leq\frac{(s_z-1)^2}{(1-r_z)^2},
\end{eqnarray} 

and
\begin{eqnarray}
\frac{s_x^2+s_y^2}{r_x^2+r_y^2}\leq\frac{(s_z+1)^2}{(1-r_z)^2}.
\end{eqnarray} 

Since $\mu$ is chosen as measure on the $IU$ equivalence class, we can consider the transform range on the $x-z$ plane, which is a convex hexagon with six vertexes: $(\pm z,\pm\sqrt{r_x^2+r_y^2})$, and $(\pm1,0)$ (More precisely, the $IU$ equivalence classes of states are considered on the first quadrant, but we consider the $x-z$ plane here, since the symmetry does not affect the ratio $\frac{V^{\mathcal{O}(\rho)}}{V^{\sup}}$). 
 
The volume of accessible set $V_a^{PIO}(\rho)$ is the square of these six vertexes:
\begin{eqnarray}
V_a^{PIO}(\rho)=2|r_z|\sqrt{r_x^2+r_y^2}+2\sqrt{r_x^2+r_y^2}.
\end{eqnarray}

Since $2|r_z|\sqrt{r_x^2+r_y^2}+2\sqrt{r_x^2+r_y^2}\leq r_x^2+r_y^2+r_z^2+2\sqrt{r_x^2+r_y^2}$, the maximal accessible volume can be reached with $r_x^2+r_y^2=r_z^2=\frac{1}{2}$ and $V_a^{\sup,PIO}(\rho)=1+\sqrt2$. Thus, we have:
\begin{eqnarray}
C_a^{PIO}(\rho)=\frac{2|r_z|\sqrt{r_x^2+r_y^2}+2\sqrt{r_x^2+r_y^2}}{1+\sqrt2}.
\end{eqnarray} 
 
On the other hand, a single-qubit $\rho= \frac{1}{2}
\begin{pmatrix} 1+r_z & r_x+ir_y   \\ r_x-ir_y & 1-r_z 
\end{pmatrix}$ can be transformed from $\sigma= \frac{1}{2}
\begin{pmatrix} 1+s_z & s_x+is_y   \\ s_x-is_y & 1-s_z 
\end{pmatrix}$ via PIO should satisfy following equalities:

\begin{eqnarray}
s_x^2+s_y^2\geq r_x^2+r_y^2,
\end{eqnarray}

\begin{eqnarray}
\frac{s_x^2+s_y^2}{r_x^2+r_y^2}\geq\frac{(s_z-1)^2}{(1-r_z)^2},
\end{eqnarray} 

\begin{eqnarray}
\frac{s_x^2+s_y^2}{r_x^2+r_y^2}\geq\frac{(s_z+1)^2}{(1-r_z)^2},
\end{eqnarray} 
 
and
\begin{eqnarray}
s_x^2+s_y^2+s_z^2\leq 1.
\end{eqnarray}
 
The volume of source set $V_a^{PIO}(\rho)$ is:
\begin{equation}
V_s^{PIO}(\rho)=
\begin{cases}
2Q_1+2[\sin{2\arcsin{t}}-\sin{2\arcsin{\sqrt{r_x^2+r_y^2}}}]-S_1
&\text{$\sqrt{r_x^2+r_y^2}+r_z^2\geq1$,} \\
2Q_2-2\sqrt{r_x^2+r_y^2}\sqrt{1-(r_x^2+r_y^2)}-S_2& \text{$\sqrt{r_x^2+r_y^2}+r_z^2\leq1$, $r_x^2+r_y^2\neq0,$}\\
\pi& \text{$r_x^2+r_y^2=0$},
\end{cases}
\end{equation}
where $Q_1=\arcsin{t}-\arcsin{\sqrt{r_x^2+r_y^2}}$, $Q_2=\frac{\pi}{2}-\arcsin{\sqrt{r_x^2+r_y^2}}$, $t=\frac{2\sqrt{r_x^2+r_y^2}(1-|r_z|)}{(r_x^2+r_y^2)+(1-|r_z|)^2}$, $S_1=2(|r_z|+\frac{(r_x^2+r_y^2)-(|r_z|-1)^2}{(r_x^2+r_y^2)+(|r_z|-1)^2})(\frac{2\sqrt{r_x^2+r_y^2}(1-|r_z|)}{(r_x^2+r_y^2)+(1-|r_z|)^2}-\sqrt{r_x^2+r_y^2})$ and $S_2=2\frac{|r_z|}{1-|r_z|}\sqrt{r_x^2+r_y^2}$.

Obviously, when $r_x^2+r_y^2=0$, $V_s^{PIO}(\rho)=V_s^{\sup,{PIO}}=\pi$. Thus we have:
\begin{eqnarray}
C_s^{PIO}(\rho)=
\begin{cases}
\frac{2}{\pi}Q_1+\frac{2}{\pi}[\sin{(2\arcsin{t})}-\sin{(2\arcsin{\sqrt{r_x^2+r_y^2}})}]-\frac{1}{\pi}S_1
&\text{$\sqrt{r_x^2+r_y^2}+r_z^2\geq1$,} \\
\frac{2}{\pi}[\frac{\pi}{2}-\arcsin{\sqrt{r_x^2+r_y^2}}]-\frac{2}{\pi}\sqrt{r_x^2+r_y^2}\sqrt{1-(r_x^2+r_y^2)}-\frac{1}{\pi}S_2& \text{$\sqrt{r_x^2+r_y^2}+r_z^2\leq1$, $r_x^2+r_y^2\neq0$,}\\
1& \text{$r_x^2+r_y^2=0$},
\end{cases}
\end{eqnarray} 
where $Q_1=\arcsin{t}-\arcsin{\sqrt{r_x^2+r_y^2}}$.
\end{example}

\begin{example}[Accessible Coherence and Source Coherence via SIO and IC]
\begin{figure}
\centering
\includegraphics[width=0.6\textwidth]{acas_ic.eps}
\caption{(color online). As shown in the figure, the source set (blue corlor), $M_s(\rho)$, and the accessible set (red corlor), $M_a(\rho)$, of the state $\rho$ are depicted. Any state in $M_s(\rho)$ can be transformed to $\rho$ via IC and SIO and $\rho$ can be transformed into any state in $M_a(\rho)$ via IC and SIO. In the figure, the single-qubit $\rho$ has the Bloch vector $(\frac{1}{2},0,\frac{\sqrt2}{2})$.}\label{acas_ic}
\end{figure}

If the set measure $\mu$ denotes the volume of these transform range above, for $\mathcal{O}\in\{SIO,IC\}$, we find that $C_a^{\mathcal{O}}(\rho)$ and $C_s^{\mathcal{O}}(\rho)$ in the qubit case is: 

\begin{eqnarray}
V_a^{\mathcal{O}}(\rho)&=&2\sqrt{\frac{r_x^2+r_y^2}{1-r_z^2}}\arcsin{\sqrt{1-r_z^2}}+2|r_z|\sqrt{r_x^2+r_y^2},
\end{eqnarray}

and
\begin{equation}
V_s^{\mathcal{O}}(\rho)=
\begin{cases}
2\arcsin{\sqrt{1-r_x^2-r_y^2}}-2\sqrt{1-r_x^2-r_y^2}\sqrt{1-r_z^2}-2\sqrt{\frac{r_x^2+r_y^2}{1-r_z^2}}\arcsin{|r_z|}+2|r_z|\sqrt{r_x^2+r_y^2}&\text{$r_x^2+r_y^2+r_z^2\neq1$,} \\
2\arcsin{|r_z|}-2|r_z|\sqrt{1-r_z^2}& \text{$r_x^2+r_y^2+r_z^2=1$}.
\end{cases}
\end{equation}

Note that $\sqrt{r_x^2+r_y^2}$ is the $l_1$-norm coherence of $\rho$. The supremum of $V_a^{\mathcal{O}}(\rho)$ is in the case $z_0=0$, thus $V_a^{\sup,\mathcal{O}}(\rho)=\pi$. Combine $V_a^{\sup}$ with $V_a$, we have

\begin{eqnarray}\label{eq:Ca_mixed}
C_a^{\mathcal{O}}(\rho)=\frac{2}{\pi}(\sqrt{\frac{r_x^2+r_y^2}{1-r_z^2}}\arcsin{\sqrt{1-r_z^2}}+|r_z|\sqrt{r_x^2+r_y^2}).
\end{eqnarray}

If $\rho$ is pure state, we have 
\begin{eqnarray}\label{eq:Ca_pure}
C_a^{\mathcal{O}}(\rho)=\frac{2}{\pi}(\arcsin{\sqrt{1-r_z^2}}+|r_z|\sqrt{1-r_z^2}).
\end{eqnarray} 

We also have $V_s^{\sup}=\pi$, and note that $1-\arcsin{|r_z|}=\arcsin{\sqrt{1-r_z^2}}$, thus
\begin{equation}
C_s^{\mathcal{O}}(\rho)=
\begin{cases}
1-\frac{2}{\pi}(\arcsin{\sqrt{1-r_x^2-r_y^2}}-\sqrt{1-r_x^2-r_y^2}\sqrt{1-r_z^2}-\sqrt{\frac{r_x^2+r_y^2}{1-r_z^2}}\arcsin{|r_z|}+|r_z|\sqrt{r_x^2+r_y^2})&\text{$r_x^2+r_y^2+r_z^2\neq1$,} \\
\frac{2}{\pi}(\arcsin{\sqrt{1-r_z^2}}+|r_z|\sqrt{1-r_z^2})& \text{$r_x^2+r_y^2+r_z^2=1$}.
\end{cases}
\end{equation}

For $SIO$ and $IC$, accessible coherence and source coherence are consistent in the pure state case. 
\end{example}

\subsection{$C_a^{\mathcal{O}}(\rho)$ and $C_s^{\mathcal{O}}(\rho)$ in two-qubit pure state transformation via SIO and IC}\label{sec:two-qubit IC}

\begin{example}[two-qubit pure state transformation via SIO and IC]

\begin{figure}
\centering
\includegraphics[width=0.45\textwidth]{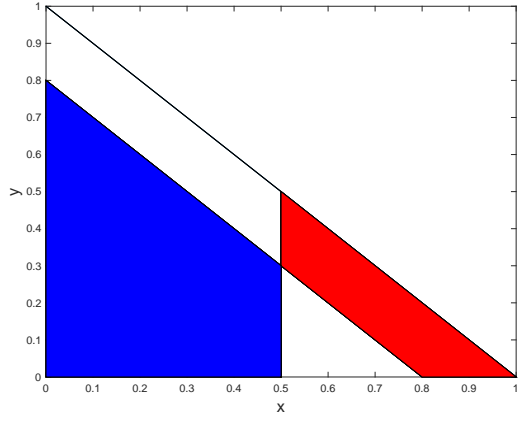}
\caption{(color online). As shown in the figure, the source set (blue), $M_s(\ket{\psi})$, and the accessible set (red), $M_a(\ket{\psi})$, of the state $\rho$ are depicted. Any state in $M_s(\ket{\psi})$ can be transformed to $\ket{\psi}$ via SIO and IC and $\ket{\psi}$ can be transformed into any state in $M_a(\ket{\psi})$ via SIO and IC. In the figure, $\ket{\psi}=\sqrt{0.5}\ket{00}+\sqrt{0.3}\ket{01}+\sqrt{0.2}\ket{10}$.}
\label{fig_qutrit}
\end{figure}

Consider a two-qubit pure state $\ket{\psi}=\sqrt{x_0}\ket{00}+\sqrt{x_1}\ket{01}+\sqrt{x_2}\ket{10}$ with $x_0+x_1+x_2=1$. As shown in Figure~\ref{fig_qutrit}, since every qutrit pure state can be represented as a point in the $x-y$ plane, the accessible volume and source volume are
\begin{align}
V_{a}^{{\mathcal{O}}}(\ket{\psi}) = \frac{1}{2}[(1-x)^2-y^2],
\end{align}

\begin{align}
V_{s}^{{\mathcal{O}}}(\ket{\psi}) = \frac{1}{2}[(x+y)^2-y^2],
\end{align}
where $x\geq y\geq z$, $x,y,z\in\{x_0,x_1,x_2\}$ and $\mathcal{O}\in\{SIO,IC\}$.

The accessible coherence and source coherence are
\begin{align}
C_{a}^{{\mathcal{O}}}(\ket{\psi}) =(1-x)^2-y^2,
\end{align}

\begin{align}
C_{s}^{{\mathcal{O}}}(\ket{\psi})=1-(x+y)^2+y^2.
\end{align}

\end{example}
%%%%%%%%%%%%%%%%%%%%%%%%%%%%%%%%%%%%%%%%%%%%%%%%%%%%%%%%%%%%%%%%%%%%%%%%%%%%%%%%%%%%%%%%%%%%%%%%%%%%%%%%%%%%%%%%%%%%%%%%%%%%%%%%%%%%%%%%%%%%%%%%%%%%%%

%%%%%%%%%%%%%%%%%%%%%%%%%%%%%%%%%%%%%%%%%%%%%%%%%%%%%%%%%%%%%%%%%%%%%%%%%%%%%%%%%%%%%%%%%%%%%%%%%%%%%%%%%%%%%%%%%%%%%%%%%%%%%%%%%%%%%%%%%%%%%%%%%%%%%%

\end{document}